\DeclarePairedDelimiter\ceil{\lceil}{\rceil}
\DeclarePairedDelimiter\floor{\lfloor}{\rfloor}
\definecolor{mygreen}{HTML}{467E7E}
\definecolor{mygray}{rgb}{0.5,0.5,0.5}
\definecolor{myblue}{HTML}{144B7D}
\definecolor{myorange}{HTML}{B25A00}
\lstdefinestyle{myC}{
  language=C,
  backgroundcolor=\color{white},
  basicstyle=\linespread{0.9}\ttfamily\small,
  breakatwhitespace=false,
  breaklines=true,
  commentstyle=[\bfseries\color{mygreen},    
  deletekeywords={...},
  escapeinside={<@}{@>},
  extendedchars=true,
  keepspaces=true,
  keywordstyle=\bfseries\color{myblue!70}, 
  otherkeywords={uint},     
  deletekeywords={get,angle, gamma, invariant},
  emph = { certiq_prove, match, end},
  emphstyle=\bfseries\color{myorange!70},
  showspaces=false, 
  showstringspaces=false,
  showtabs=false, 
  stringstyle=\color{mymauve},
  tabsize=1,
 morecomment=[f][\bfseries\color{mymauve!70}][0]{@},
 morecomment=[f][\bfseries\color{mygreen}][0]{//},
}
\newcommand{\parheader}[1]{\noindent {\bf #1}}
\newcommand{\cmark}{\ding{51}}%
\newcommand{\xmark}{\ding{55}}%
\newcommand{\specialcellbold}[2][b]{%
  \bfseries
  \sisetup{text-rm=\bfseries}%
  \begin{tabular}[#1]{@{}c@{}}#2\end{tabular}%
}
\definecolor{mred}{rgb}{.80,.12,.30}
\definecolor{grey}{rgb}{0.5,0.5,0.5}
\definecolor{purple2}{rgb}{.75,0,.85}
\definecolor{pistachio}{rgb}{0.58, 0.77, 0.45}
\definecolor{steelblue}{rgb}{.10,.40,.85}
\newcommand{\gabe}[1] {\textcolor{purple2}{(Gabe: #1)}}
\newcommand{\revise}[1]  {\textcolor{black}{#1}}
\renewcommand{\C}[0]{\mathcal{S}}
\begin{document}

\title[Learning Nonlinear Loop Invariants with G-CLNs]{Learning Nonlinear Loop Invariants with Gated \\ Continuous Logic Networks (Extended Version)}         


\newcommand*\samethanks[1][\value{footnote}]{\footnotemark[#1]}
\author{Jianan Yao}
\authornote{Equal contribution}
\affiliation{Columbia University\country{USA}}
\email{jianan@cs.columbia.edu}          
\author{Gabriel Ryan}
\authornotemark[1]
\affiliation{Columbia University\country{USA}}
\email{gabe@cs.columbia.edu} 
\author{Justin Wong}
\authornotemark[1]
\affiliation{Columbia University\country{USA}}
\email{justin.wong@columbia.edu} 
\author{Suman Jana}
\affiliation{Columbia University\country{USA}}
\email{suman@cs.columbia.edu} 
\author{Ronghui Gu}
\affiliation{Columbia University, CertiK\country{USA}}
\email{rgu@cs.columbia.edu} 
\renewcommand{\shortauthors}{J. Yao, G. Ryan, J. Wong, S. Jana, R. Gu}

\newcommand\blfootnote[1]{%
  \begingroup
  \renewcommand\thefootnote{}\footnote{#1}%
  \addtocounter{footnote}{-1}%
  \endgroup
}

\begin{abstract}
  \revise{Verifying real-world programs often requires inferring loop invariants with nonlinear constraints.} This is especially true in programs that perform many numerical operations, such as control systems for avionics or industrial plants. Recently, data-driven methods for loop invariant inference have shown promise, especially on linear loop invariants. However, applying data-driven inference to nonlinear loop invariants is challenging due to the large numbers of and large magnitudes of high-order terms, the potential for overfitting on a small number of samples, and the large space of possible nonlinear inequality bounds.


In this paper, we introduce a new neural architecture for general SMT learning, the Gated Continuous Logic Network (G-CLN), and apply it to nonlinear loop invariant learning. G-CLNs extend the Continuous Logic Network (CLN) architecture with gating units and dropout, which allow the model to robustly learn general invariants over large numbers of terms. To address overfitting that arises from finite program sampling, we introduce fractional sampling---a sound relaxation of loop semantics to continuous functions that facilitates unbounded sampling on the real domain. We additionally design a new CLN activation function, the Piecewise Biased Quadratic Unit (PBQU), for naturally learning tight inequality bounds. 

We incorporate these methods into a nonlinear loop invariant inference system that can learn general nonlinear loop invariants. We evaluate our system on a benchmark of nonlinear loop invariants and show it solves 26 out of 27 problems, 3 more than prior work, with an average runtime of 53.3 seconds. We further demonstrate the generic learning ability of G-CLNs by solving all 124 problems in the linear Code2Inv benchmark. We also perform a quantitative stability evaluation and show G-CLNs have a convergence rate of $97.5\%$ on quadratic problems, a $39.2\%$ improvement over CLN models. 



 
\end{abstract}



\begin{CCSXML}
<ccs2012>
   <concept>
       <concept_id>10011007.10010940.10010992.10010998.10010999</concept_id>
       <concept_desc>Software and its engineering~Software verification</concept_desc>
       <concept_significance>500</concept_significance>
       </concept>
   <concept>
       <concept_id>10010147.10010257.10010293.10010294</concept_id>
       <concept_desc>Computing methodologies~Neural networks</concept_desc>
       <concept_significance>500</concept_significance>
       </concept>
 </ccs2012>
\end{CCSXML}

\ccsdesc[500]{Software and its engineering~Software verification}
\ccsdesc[500]{Computing methodologies~Neural networks}

\keywords{Loop Invariant Inference, Program Verification, Continuous Logic Networks}  

 \maketitle

\section{Introduction}


    
        

Formal verification provides techniques for proving the correctness of programs, thereby eliminating entire classes of critical bugs. 
While many operations can be verified automatically, verifying programs with loops usually requires inferring a sufficiently strong loop invariant, which is undecidable in general~\cite{hoare1969axiomatic,blass2001inadequacy, furia2014loop}. Invariant inference systems are therefore based on heuristics that work well for loops that appear in practice.
 Data-driven loop invariant inference is one approach that has shown significant promise, especially for learning linear invariants \cite{zhu2018chc, si2018learning, cln2inv}. Data-driven inference operates by sampling program state across many executions of a program and trying to identify an Satisfiability Modulo Theories (SMT) formula that is satisfied by all the sampled data points. 

However, verifying real-world programs often requires loop invariants with nonlinear constraints. This is especially true in programs that perform many numerical operations, such as control systems for avionics or industrial plants~\cite{damm2005guaranteed, lin2014hybrid}. 
Data-driven nonlinear invariant inference is fundamentally difficult because the space of possible nonlinear invariants is large, but sufficient invariants for verification must be inferred from a finite number of samples.
In practice, this leads to three distinct challenges when performing nonlinear data-driven invariant inference: (i) {\it Large search space with high magnitude terms.} Learning nonlinear terms causes the space of possible invariants to grow quickly (i.e. polynomial expansion of terms grows exponentially in the degree of terms). Moreover, large terms such as $x^2$ or $x^y$ dominate the process and prevent meaningful invariants from being learned.
(ii) {\it Limited samples.} Bounds on the number of loop iterations in programs with integer variables limit the number of possible samples, leading to overfitting when learning nonlinear invariants.  (iii) {\it Distinguishing sufficient inequalities.} For any given finite set of samples, there are potentially infinite valid inequality bounds on the data. However, verification usually requires specific bounds that constrain the loop behavior as tightly as possible.

\begin{figure*}[t]
\vspace{-10pt}
\begin{subfigure}[b]{0.48\textwidth}
\quad
\begin{minipage}{.45\linewidth}
\begin{lstlisting}[style=myC,numbers=none,basicstyle=\linespread{0.9}\ttfamily\footnotesize]
// pre: (a >= 0)
n=0; x=0;
y=1; z=6;
// compute cube:
while(n != a){
   n += 1;
   x += y;
   y += z;
   z += 6;
}
return x;
// post: x == a^3
\end{lstlisting}
\end{minipage}
\hspace{-30pt}
\begin{minipage}{.6\linewidth}
\vspace{15pt}
\includegraphics[width=.95\linewidth]{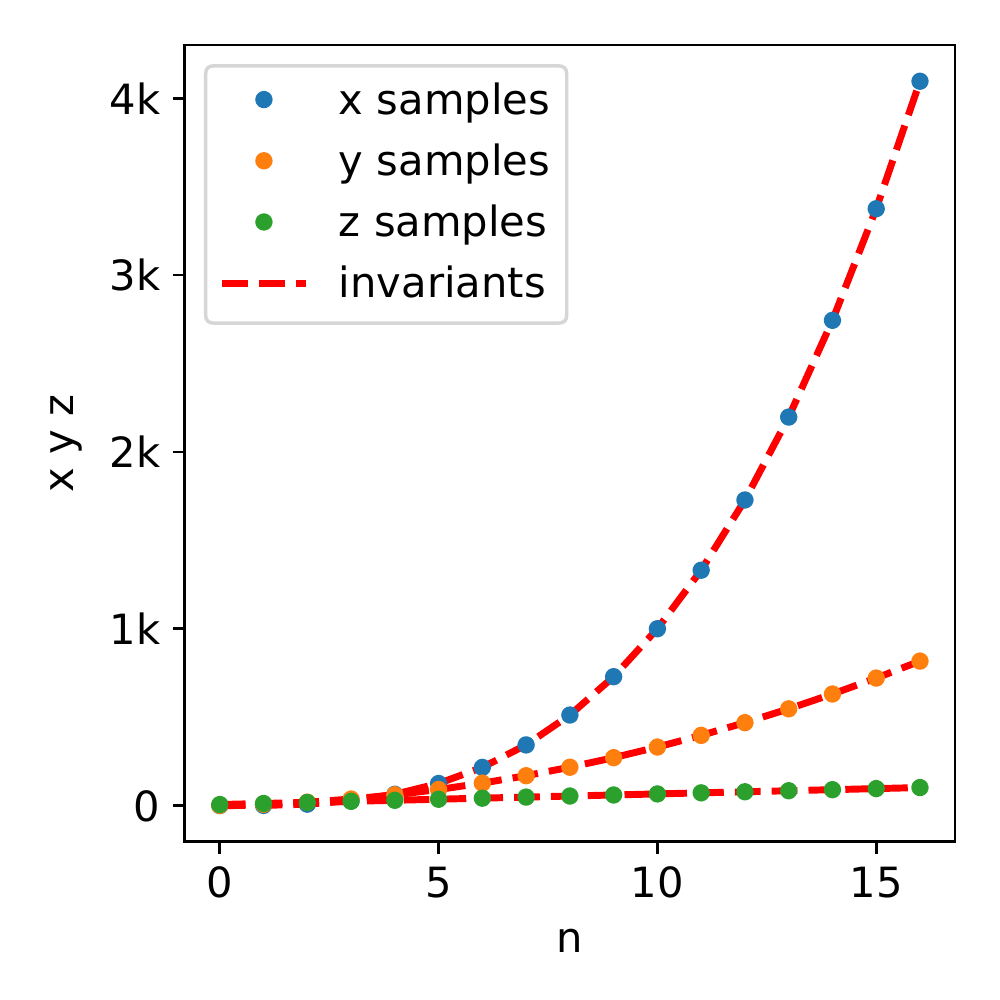}
\end{minipage}
\setlength{\abovecaptionskip}{-5pt}
\caption{\label{fig:cohencu_ex} Loop for computing cubes that requires the invariant $(x=n^3)\land (y=3n^2+3n+1)\land (z=6n+6)$ to infer its postcondition $(x = a^3)$. A data-driven model must simultaneously learn a cubic constraint that changes by 1000s and a linear constraint that increments by 6.}
\end{subfigure}
\quad \;\;
\begin{subfigure}[b]{0.48\textwidth}
\quad
\begin{minipage}{.45\linewidth}
\hspace{-5pt}
\begin{lstlisting}[style=myC,numbers=none,basicstyle=\linespread{0.9}\ttfamily\footnotesize]
// pre: (n >= 0)
a=0; s=1; t=1;
// compute sqrt:
while (s <= n) {
  a += 1;
  t += 2;
  s += t;
}
return a;
//post: a^2 <= n
//and n < (a+1)^2 
\end{lstlisting}
\end{minipage}
\hspace{-30pt}
\begin{minipage}{.6\linewidth}
\vspace{25pt}
\includegraphics[width=.95\linewidth]{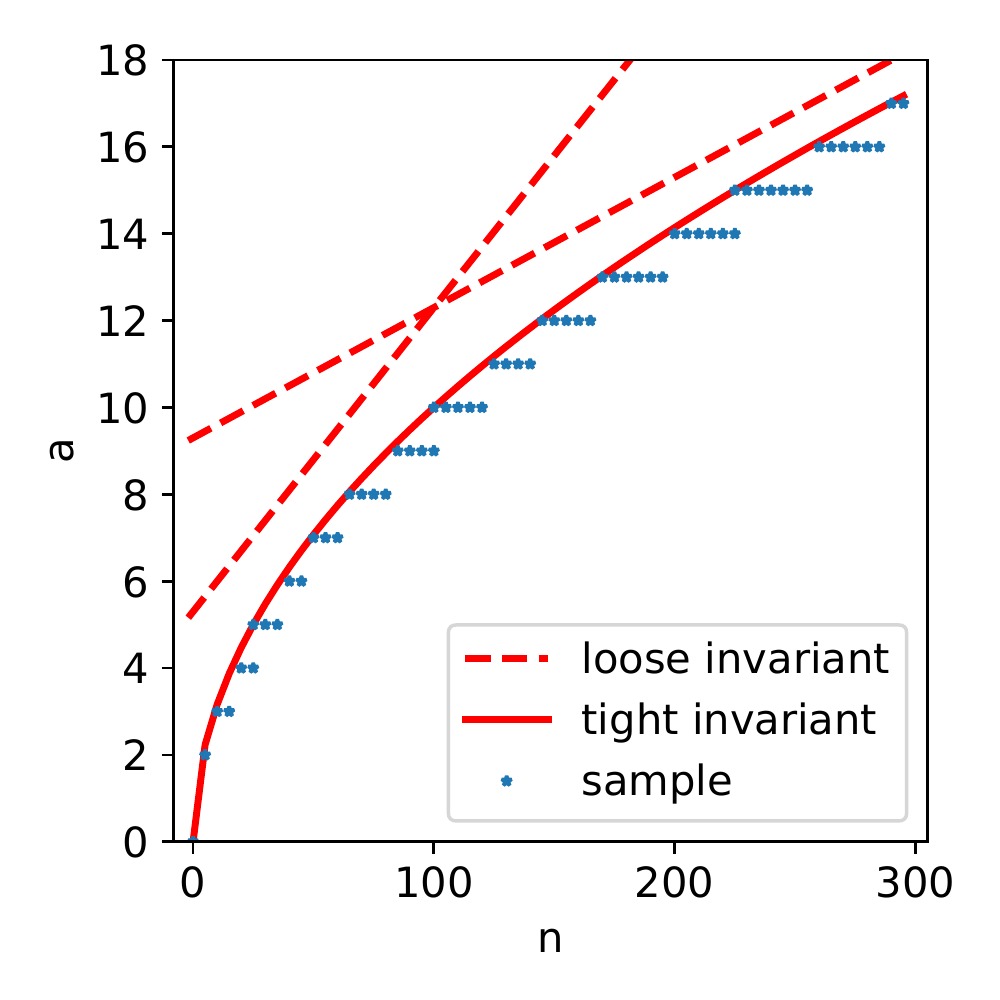}
\end{minipage}
\setlength{\abovecaptionskip}{-5pt}
\caption{\label{fig:ineq_ex} Loop for computing integer approximation to square root. The graph shows three valid inequality invariants, but only the tight quadratic inequality invariant $(n \geq a^2)$ is sufficient to verify that the final value of $a$ is between $\floor{ sqrt(n)}$ and $\ceil{ sqrt(n)}$. }
\end{subfigure}
\caption{\label{fig:example_problems}Example problems demonstrating the challenges of nonlinear loop invariant learning. }
\end{figure*}

Figure \ref{fig:cohencu_ex} and \ref{fig:ineq_ex} illustrate the challenges posed by loops with many higher-order terms as well as nonlinear inequality bounds. The loop in Figure~\ref{fig:cohencu_ex} computes a cubic power and requires the invariant $(x=n^3)\land (y=3n^2+3n+1)\land (z=6n+6)$ to verify its postcondition $(x=a^3)$. To infer this invariant, a \revise{typical} data-driven inference system must consider 35 possible terms, ranging from $n$ to $x^3$, only seven of which are contained in the invariant. Moreover, the 
higher-order terms in the program will dominate any error measure in fitting an invariant, so any data-driven model will tend to only learn the constraint $(x=n^3)$. Figure \ref{fig:ineq_ex} shows a loop for computing integer square root where the required invariant is $(n \geq a^2)$ to verify its postcondition. However, a data-driven model must identify this invariant from potentially infinite other valid but loosely fit inequality invariants.

Most existing methods for nonlinear loop invariant inference address these challenges by limiting either the structure of invariants they can learn or the complexity of invariants they can scale to. Polynomial equation  solving methods such as Numinv and Guess-And-Check are able to learn equality constraints but cannot learn nonlinear inequality invariants~\cite{nguyen2017numinv, sharma2013data}. In contrast, template enumeration methods such as PIE can potentially learn arbitrary invariants but struggle to scale to loops with nonlinear invariants because space of possible invariants grows too quickly \cite{padhi2016data}. 

In this paper, we introduce an approach 
that can learn general nonlinear loop invariants.
Our approach is based on Continuous Logic Networks (CLNs), a recently proposed neural architecture that can learn SMT formulas directly from program traces \cite{cln2inv}. 
CLNs use a parameterized relaxation that relaxes SMT formulas to differentiable functions.
This allows CLNs to learn SMT formulas with gradient descent, but a template that defines
the logical structure of the formula has to be manually provided.

We base our approach on three developments that address the challenges inherent in nonlinear loop invariant inference: First, we introduce a new neural architecture, the \emph{Gated Continuous Logic Network} (G-CLN), a more robust CLN architecture that is not dependent on formula templates. Second, we introduce \emph{Fractional Sampling}, a principled program relaxation for dense sampling. Third, we derive the \emph{Piecewise Biased Quadratic Unit} (PBQU), a new CLN activation function for inequality learning. We provide an overview of these methods below.


\parheader{Gated Continuous Logic Networks.} 
G-CLNs improve the CLN architecture by making it more robust and general. Unlike CLNs, G-CLNs are not dependent on formula templates for logical structure. We adapt three different methods from deep learning to make G-CLN training more stable and combat overfitting:
gating, dropout, and batch normalization~\cite{srivastava2014dropout, ioffe2015batch, gers1999learning, bahdanau2014neural}. To force the model to learn a varied combination of constraints, we apply {\it Term Dropout}, which operates similarly to dropout in feedforward neural networks by zeroing out a random subset of terms in each clause. Gating makes the CLN architecture robust by allowing it to ignore subclauses that cannot learn satisfying coefficients for their inputs, due to poor weight initialization or dropout.  
To stabilize training in the presence of high magnitude nonlinear terms, we apply normalization to the input and weights similar to batch normalization.

By combining dropout with gating, G-CLNs are able to learn complex constraints for loops with many higher-order terms. For the loop in Figure~\ref{fig:cohencu_ex}, the G-CLN will set the $n^2$ or $n^3$ terms to zero in several subclauses during dropout, forcing the model to learn a conjunction of all three equality constraints. Clauses that cannot learn a satisfying set of coefficients due to dropout, i.e. a clause with only $x$ and $n$ but no $n^3$ term, will be ignored by a model with gating.

\parheader{Fractional Sampling.}
When the samples from the program trace are insufficient to learn the correct invariant due to bounds on program behavior, we perform a principled relaxation of the program semantics to continuous functions. This allows us to  perform Fractional Sampling, which generates samples of the loop behavior at intermediate points between integers. To preserve soundness, we define the relaxation such that operations retain their discrete semantics relative to their inputs but operate on the real domain, and any invariant for the continuous relaxation of the program must be an invariant for the discrete program.
This allows us to take potentially unbounded samples even in cases where the program constraints prevent sufficient sampling to learn a correct invariant. 


\parheader{Piecewise Biased Quadratic Units.}
For inequality learning, we design a PBQU activation, which penalizes loose fits and converges to tight constraints on data. We prove this function will learn a tight bound on at least one point and demonstrate empirically that it learns precise bounds invariant bounds, such as the $(n \geq a^2)$ bound shown in Figure~\ref{fig:ineq_ex}.

We use G-CLNs with Fractional Sampling and PBQUs to develop a unified approach for general nonlinear loop invariant inference. 
 We evaluate our approach on a set of loop programs with nonlinear invariants, and show it can learn invariants for 26 out of 27 problems, 3 more than prior work, with an average runtime of 53.3 seconds. We also perform a quantitative stability evaluation and show G-CLNs have a convergence rate of $97.5\%$ on quadratic problems, a $39.2\%$ improvement over CLN models. We also test the G-CLN architecture on the linear Code2Inv benchmark \cite{si2018learning} and show it can solve all 124 problems.

In summary, this paper makes the following contributions:
\begin{itemize}
    \item We develop a new general and robust neural architecture, the Gated Continuous Logic Network (G-CLN), to learn general SMT formulas without relying on formula templates for logical structure.
    \item We introduce Fractional Sampling, a method that facilitates sampling on the real domain by applying a principled relaxation of program loop semantics to continuous functions while preserving soundness of the learned invariants. 
    \item We design PBQUs, a new activation function for learning tight bounds inequalities, and provide convergence guarantees for learning a valid bound.
    \item We integrate our methods in a general loop invariant inference system and show it solves 26 out 27 problems in a nonlinear loop invariant benchmark, 3 more than prior work. Our system can also infer loop invariants for all 124 problems in the linear Code2Inv benchmark. 
\end{itemize}

The rest of the paper is organized as follows:
In \S \ref{sec:background}, we provide background on the loop invariant inference problem, differentiable logic, and the CLN neural architecture for SMT learning. Subsequently, we introduce the high-level workflow of our method in \S \ref{sec:workflow}. 
Next, in \S \ref{sec:theory}, we formally define the gated CLN construction, relaxation for fractional sampling, and PBQU for inequality learning, and provide soundness guarantees for gating and convergence guarantees for bounds learning. We then provide a detailed description of our approach for nonlinear invariant learning with CLNs in \S \ref{sec:methodology}. Finally we show evaluation results in \S \ref{sec:evaluation} and discuss related work in \S \ref{sec:related} before concluding in \S \ref{sec:conclusion}.

\section{Background}
\label{sec:background}

In this section,
we provide a brief review of the loop invariant inference problem and then define the differentiable logic operators and the Continuous Logic Network architecture used in our approach.

\subsection{Loop Invariant Inference}
\label{section:background_loopinvariants}


Loop invariants encapsulate properties of the loop which are independent of the iterations and enable verification to be performed over loops. For an invariant to be sufficient for verification, it must simultaneously be weak enough to be derived from the precondition and strong enough to conclude the post-condition.
\revise{Formally, the loop invariant inference problem is, given a loop ``\texttt{while(LC) C},'' a precondition $P$, and a post-condition $Q$, we are asked to find an inductive invariant $I$ that satisfies the following three conditions}:
\begin{align*}
    P \implies I & & \{I \land LC\}\ C\ \{I\} & & I \land \neg LC \implies Q 
\end{align*}
where the inductive condition is defined using a Hoare triple. 

Loop invariants can be encoded in SMT, which facilitates efficient checking of the conditions with solvers such as Z3~\cite{de2008z3, handbookofSAT}. As such, our work focuses on inferring likely candidate invariant as validating a candidate can be done efficiently.

\parheader{Data-driven Methods.} Data-driven loop invariant inference methods use program traces recording the state of each variable in the program on every iteration of the loop to guide the invariant generation process. Since an invariant must hold for any valid execution, the collected traces can be used to rule out many potential invariants. Formally, given a set of program traces $X$, data-driven invariant inference finds SMT formulas $F$ such that:
$$ \forall x \in X,  F(x) = True$$


\subsection{Basic Fuzzy Logic}
\label{section:background_BL}
Our approach to SMT formula learning is based on a form of differentiable logic called Basic Fuzzy Logic (BL).
BL is a relaxation of first-order logic that operates on continuous truth values on the interval $[0, 1]$ instead of on boolean values. BL uses a class of functions called {\it t-norms} ($\otimes$), which preserves the semantics of boolean conjunctions on continuous truth values. T-norms are required to be consistent with boolean logic, monotonic on their domain, commutative, and associative \cite{hajek2013metamathematics}. Formally, a t-norm is defined $\otimes: [0, 1]^2 \rightarrow [0,1]$ such that:
\begin{itemize}
    \item $\otimes$ is consistent for any $t \in [0,1]$:
    \begin{align*}
        t \otimes 1 &= t&  t\otimes 0 &= 0
    \end{align*}
    \item $\otimes$ is commutative and associative for any $t \in [0, 1]$:
    \begin{align*}
        t_1 \otimes t_2 &= t_2 \otimes t_1&
        t_1 \otimes (t_2 \otimes t_3) &= (t_1 \otimes t_2) \otimes t_3
    \end{align*}
       
    \item $\otimes$ is monotonic (nondecreasing) for any  $t \in [0,1]$:
    \begin{align*}
        t_1 \leq t_2 \implies t_1 \otimes t_3 \leq t_2 \otimes t_3
    \end{align*}
\end{itemize}
BL additionally requires that t-norms be continuous. {\it T-conorms} ($\oplus$) are derived from t-norms via DeMorgan's law and operate as disjunctions on continuous truth values, while negations are defined $\neg t := 1 - t$.

\revise{In this paper, we keep t-norms abstract in our formulations to make the framework general. Prior work~\cite{cln2inv} found product t-norm $x \otimes y = x \cdot y$ perform better in Continuous Logic Networks. For this reason, we use product t-norm in our final implementation, although other t-norms (e.g., Godel) can also be used. }

\subsection{Continuous Logic Networks}
\label{section:background_CLNs}

We perform SMT formula learning with Continuous Logic Networks (CLNs),
 a neural architecture introduced in \cite{cln2inv} that are able to learn SMT formulas directly from data. These can be used to learn loop invariants from the observed behavior of the program. 

CLNs are based on a parametric relaxation of SMT formulas that maps the SMT formulation from boolean first-order logic to BL. The model defines the operator $\C$. Given an quantifier-free SMT formula $F: X \rightarrow \{True, False\}$,
$\C$ maps it to a continuous function $\C(F): X \rightarrow [0, 1]$.  In order for the continuous model to be both 
usable in gradient-guided optimization 
while also preserving the semantics of boolean logic, 
it must fulfill three conditions:
\begin{enumerate}
  \item It must preserve the meaning of the logic, such that the continuous truth values of a valid assignment are always greater than the value of an invalid assignment:
  \begin{align*}
      (F(x) = True \wedge F(x') &= False) \\\implies{}
      &  (\C(F)(x) > \C(F)(x'))
  \end{align*}
  \item It must be must be continuous and smooth (i.e. differentiable almost everywhere) to facilitate training.
  \item It must be strictly increasing as an unsatisfying assignment of terms approach satisfying the mapped formula, and strictly decreasing as a satisfying assignment of terms approach violating the formula. 
\end{enumerate}
$\C$ is constructed as follows to satisfy these requirements. The logical relations $\{\land, \lor, \neg\}$ are mapped to their continuous equivalents in BL:
\begin{align*}
&\textrm{Conjunction:}& \C(F_1 \land F_2)  &\triangleq  \C(F_1) \otimes \C(F_2)\\
&\textrm{Disjunction:}& \C(F_1 \lor F_2) &\triangleq \C(F_1) \oplus \C(F_2)\\
&\textrm{Negation:}& \C(\neg F)  &\triangleq 1 - \C(F)
\end{align*}
\noindent where any $F$ is an SMT formula. $\C$ defines SMT predicates $\{=, \neq, <, \leq, >, \geq\}$ with functions that map to continuous truth values. This mapping is defined for $\{>, \geq\}$ using sigmoids with a shift parameter $\epsilon$ and smoothing parameter $B$: 
\begin{align*}
    &\textrm{Greater Than:}& &\C(x_1 > x_2)  \triangleq \frac{1}{1 + e^{-B(x_1-x_2-\epsilon)}}\\
    &\textrm{Greater or Equal to:}& &\C(x_1 \geq x_2)  \triangleq \frac{1}{1 + e^{-B(x_1-x_2+\epsilon)}}
\end{align*}
\noindent where $x_1, x_2 \in R$.  Mappings for other predicates are derived from their logical relations to $\{>, \geq\}$:
\begin{align*}
    &\textrm{Less Than:}& &\C(x_1 < x_2) = \C(\neg(x_1 \geq x_2))\\
    &\textrm{Less or Equal to:}& &\C(x_1 \leq x_2) = \C(\neg (x_1 > x_2))\\
    &\textrm{Equality:}& &\C(x_1 = x_2) = \C((x_1 \geq x_2) \land (x_1 \leq x_2))\\
    &\textrm{Inequality:}& &\C(x_1 \neq x_2) = \C(\neg (x_1=x_2))
\end{align*}
Using these definitions the parametric relaxation $\C$ satisfies all three conditions for sufficiently large $B$ \revise{and sufficiently small $\epsilon$}. Based on this parametric relaxation $S(F)$, we build a Continuous Logic Network model $M$, which is a computational graph of $S(F)(x)$ with learnable parameters $W$. When training a CLN, loss terms are applied to penalize small $B$, ensuring that as the loss approaches 0 the CLN will learn a precise formula. Under these conditions, the following relationship holds between a trained CLN model $M$ with coefficients $W$ and its associated formula $F$ for a given set of data points, $X$:
$$\forall x \in X, M(x;W) = 1 \iff F(x; W) = True$$ 
\noindent Figure \ref{fig:ex_cln_plot} shows an example CLN for the formula on a single variable $x$: $$F(x) = (x = 1) \lor  (x \geq 5) \lor(x \geq 2 \land x \leq 3)$$

\begin{figure}
  \centering
    \includegraphics[width=.9\linewidth]{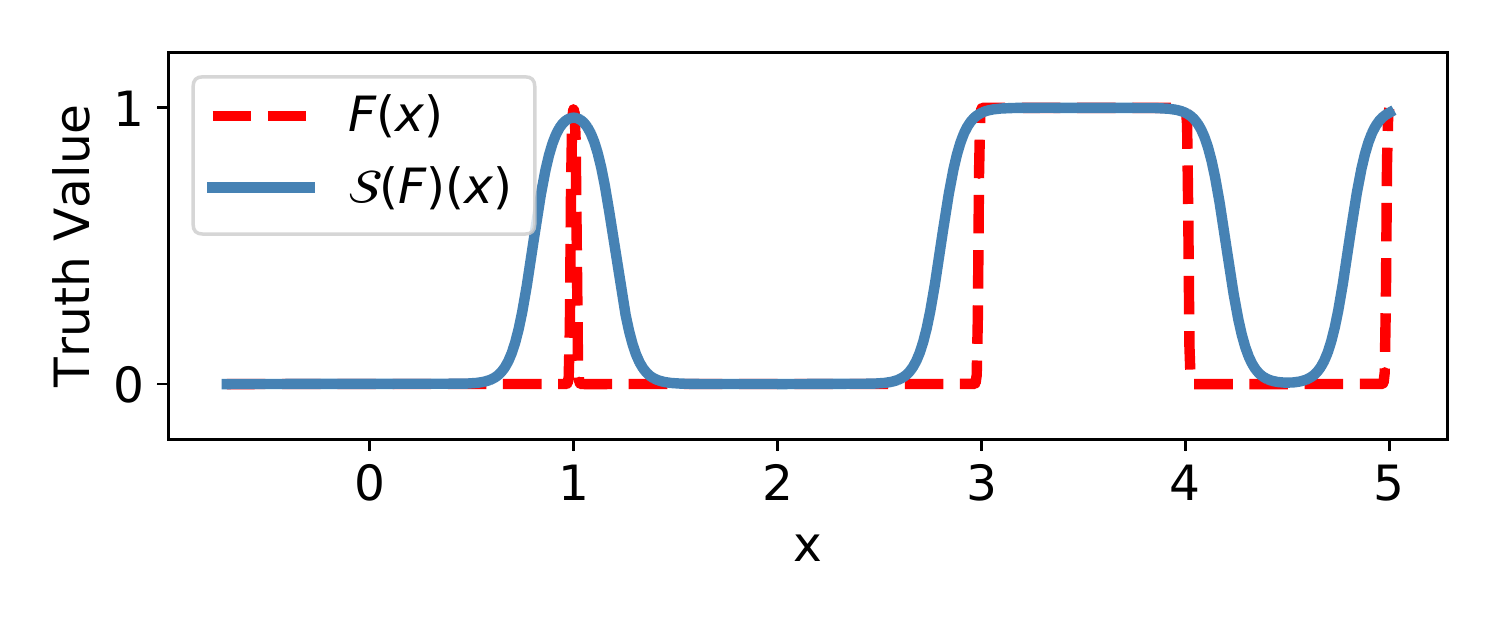}
 \setlength{\abovecaptionskip}{-5pt}
  \caption{\label{fig:ex_cln_plot}Plot of the formula $F(x) \triangleq (x = 1)  \lor (x \geq 5) \lor (x \geq 2 \land x \leq 3)$ and its associated CLN $M(x)$.}
\end{figure}


\section{Workflow}
\label{sec:workflow}
\begin{figure} 
\centering
\includegraphics[width=.95\linewidth, trim={0 0.0 0.0cm 0}]{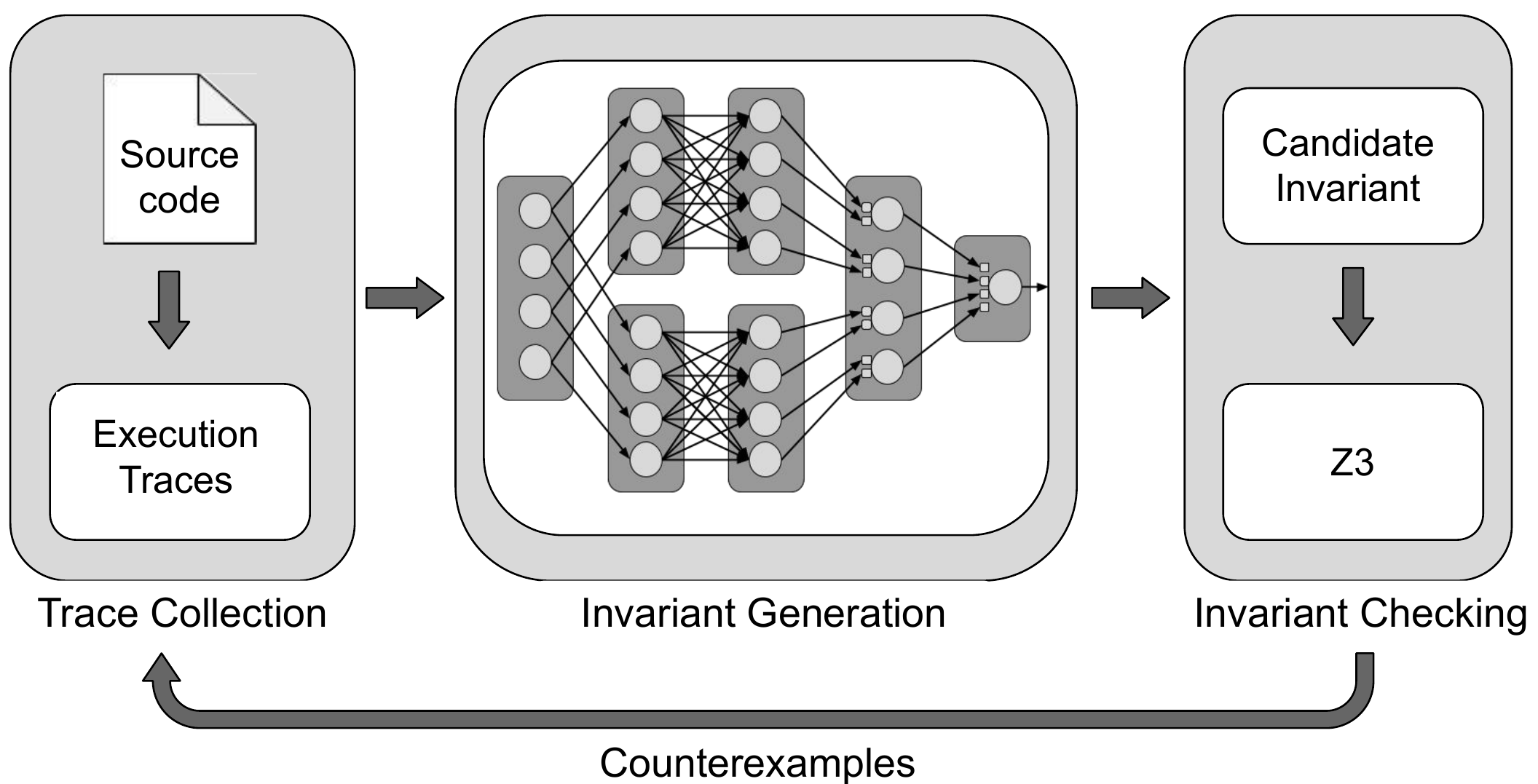}
\caption{Overview of method consisting of 3 phases: trace generation from source code file, G-CLN training, and invariant extraction followed by checking with Z3.
}
\setlength{\belowcaptionskip}{-20pt}
\label{fig:arch}
\end{figure}


\begin{figure}
\quad
\begin{subfigure}[b]{.5\columnwidth}
\begin{lstlisting}[style=myC,numbers=none,basicstyle=\linespread{0.9}\ttfamily\footnotesize]
// pre: (n >= 0)
a=0; s=1; t=1;
while (s<=n){
  log(a,s,t,n);
  a += 1;
  t += 2;
  s += t;
}
log(a,s,t,n);
\end{lstlisting}
\vspace{-5pt}
\caption{\label{fig:recordVars} Program instru-\\mented to log samples.}
\end{subfigure}
\hspace{-35pt}
\begin{subfigure}[b]{.5\columnwidth}
\vspace{10pt}
\begin{tabular}{cccccccc}
    \toprule
    1 & a & t & ... & $as$ & $t^2$ & $st$\\
    \midrule
    1 & 0 & 1 &    & 0 & 1 & 1 \\
    1 & 1 & 3 &    & 4 & 9 & 12 \\
    1 & 2 & 5 &    & 18 & 25 & 45 \\
    1 & 3 & 7 &    & 48 & 49 & 112 \\
    \bottomrule
\end{tabular}
\vspace{10pt}
\caption{\label{fig:workflowSamples} Sample data points generated with maximum degree of 2.}
\end{subfigure}
\caption{\label{fig:workflow} Training data generation for the program shown in Figure \ref{fig:ineq_ex}.}
\end{figure}
Figure \ref{fig:arch} illustrates our overall workflow for loop invariant inference. Our approach has three stages: (i) We first instrument the program and execute to generate trace data. (ii) We then construct and train a G-CLN model to fit the trace data. (iii) We extract a candidate loop invariant from the model and check it against a specification.

Given a program loop, we modify it to record variables for each iteration and then execute the program to generate samples. Figure \ref{fig:recordVars} illustrates this process for the \texttt{sqrt} program from Figure \ref{fig:ineq_ex}. The program has the input $n$ with precondition $(n \geq 0)$, so we execute with values $n=0,1,2,...$ for inputs in a set range. Then we expand the samples to all candidate terms for the loop invariant. By default, we enumerate all the monomials over program variables up to a given degree $maxDeg$, as shown in Figure \ref{fig:workflowSamples}. Our system can be configured to consider other non-linear terms like $x^y$. 

We then construct and train a G-CLN model using the collected trace data. We use the model architecture described in \S \ref{sec:gcln-arch}, with PBQUs for bounds learning using the procedure in \S \ref{sec:ineq_learning}.  After training the model, the SMT formula for the invariant is extracted by recursively descending through the model and extracting clauses whose gating parameters are above 0.5, as outlined in Algorithm \ref{alg:formulaExtraction}. On the \texttt{sqrt} program, the model will learn the invariant $\left(a ^2 \leq n\right) \land \left(t = 2a + 1\right) \land \left(su = ( a + 1 ) ^2\right)$.

Finally, if z3 returns a counterexample, we will incorporate it into the training data, and rerun the three stages with more generated samples. Our system repeats until a valid invariant is learned or times out.

\section{Theory}
\label{sec:theory}

In this section, we first present our gating construction for CLNs and prove gated CLNs are \emph{sound}
with regard to their underlying discrete logic. We then describe \emph{Piecewise Biased Quadratic Units}, a specific activation function construction for learning tight bounds on inequalities, and provide theoretical guarantees. Finally we present a technique to relax loop semantics and generate more samples when needed.

\subsection{Gated t-norms and t-conorms}
\label{sec:gating}

In the original CLNs \cite{cln2inv}, a formula template is required to learn the invariant. For example, to learn the invariant $(x+y=0) \lor (x-y=0)$, we have to provide the template $(w_1x+w_2y+b_1=0) \lor (w_3x+w_4y+b_2=0)$, which can be constructed as a CLN model to learn the coefficients. So, we have to know in advance whether the correct invariant is an atomic clause, a conjunction, a disjunction, or a more complex logical formula. 
To tackle this problem, we introduce gated t-norms and gated t-conorms.

Given a classic t-norm $T(x,y)=x \otimes y$, we define its associated gated t-norm as 
\[T_G(x,y;g_1,g_2) = (1 + g_1(x-1)) \otimes (1 + g_2(y-1))\]
Here $g_1,g_2 \in [0,1]$ are gate parameters indicating if $x$ and $y$ are activated, respectively.  The following equation shows the intuition behind gated t-norms.
\[T_G(x,y;g_1,g_2) = \left\{
\begin{array}{lcr}
    x \otimes y && g_1=1, g_2=1 \\
    x && g_1=1, g_2=0 \\
    y && g_1=0, g_2=1 \\
    1 && g_1=0, g_2=0
\end{array}\right.\]

Informally speaking when $g_1=1$, the input $x$ is activated and behaves as in the classic t-norm. When $g_1=0$, $x$ is deactivated and discarded. When $0<g_1<1$, the value of $g_1$ indicates how much information we should take from $x$. This pattern also applies for $g_2$ and $y$.

We can prove that $\forall g_1,g_2 \in [0,1]$, the gated t-norm is continuous and monotonically increasing with regard to $x$ and $y$, thus being well suited for training.

Like the original t-norm, the gated t-norm can be easily extended to more than two operands. In the case of three operands, we have the following: 
\begin{align*}
    T_G(x,y,z;g_1,g_2,g_3) = &\ (1 + g_1(x-1)) \otimes (1 + g_2(y-1)) \\
     & \otimes (1 + g_3(z-1))
\end{align*}

Using De Morgan's laws $x \oplus y = 1 - (1-x) \otimes (1-y)$, we define gated t-conorm as
\[T'_G(x,y;g_1,g_2) = 1 - (1 - g_1 x) \otimes (1 - g_2 y)\]
Similar to gated t-norms, gated t-conorms have the following property.
\[T'_G(x,y;g_1,g_2) = \left\{
\begin{array}{lcr}
    x \oplus y && g_1=1, g_2=1 \\
    x && g_1=1, g_2=0 \\
    y && g_1=0, g_2=1 \\
    0 && g_1=0, g_2=0
\end{array}\right.\]

Now we replace the original t-norms and t-conorms in CLN with our gated alternatives, which we diagram in Figure \ref{fig:gated_norm}. Figure \ref{fig:gating} demonstrates a gated CLN for representing an SMT formula. With the gated architecture, the gating parameters ${g_1, g_2}$ for each gated t-norm or gated t-conorm are made learnable during model training, such that the model can decide which input should be adopted and which should be discarded from the training data. This improves model flexibility and does not require a specified templates. 

\begin{figure}[bp]
\begin{subfigure}{0.65\linewidth}
    \includegraphics[width=\linewidth]{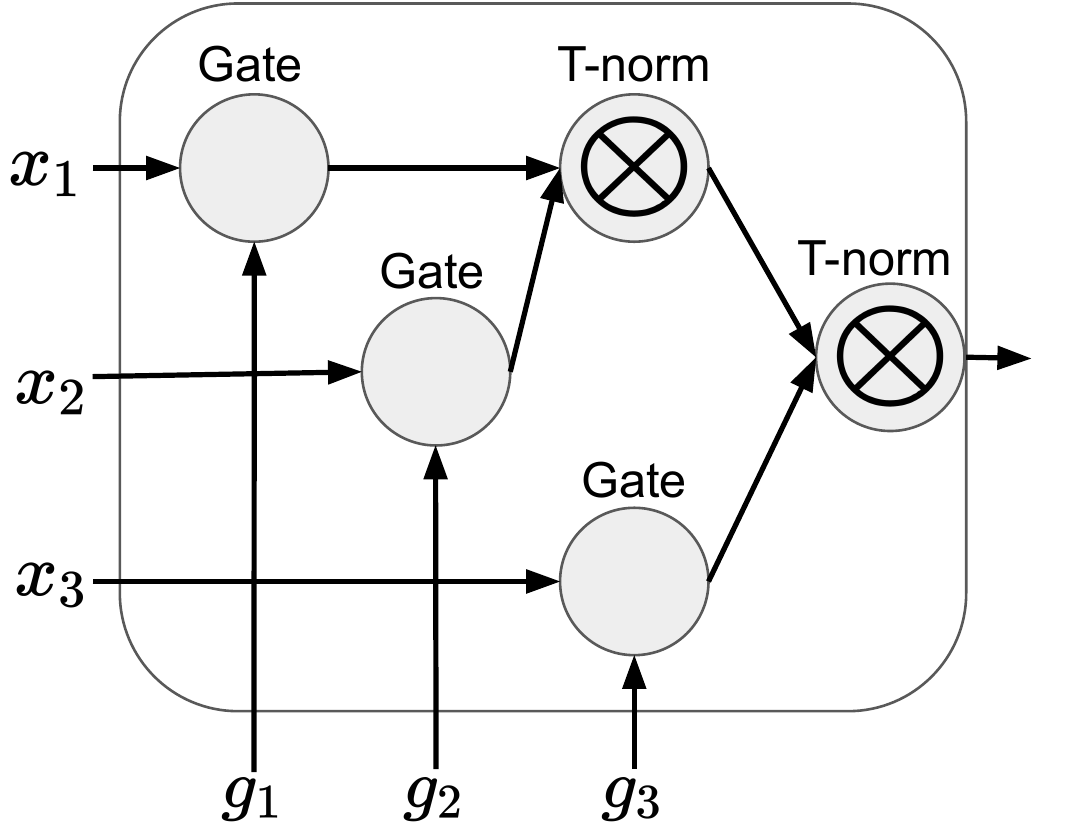}
\end{subfigure}
\setlength{\abovecaptionskip}{10pt}
\caption{Example of gated t-norm with three operands constructed from binary t-norms. The gated t-conorm is done similarly.}
\label{fig:gated_norm}
\setlength{\belowcaptionskip}{10pt}
\end{figure}

\begin{figure}[htbp]
\centering
\includegraphics[width=0.6\linewidth]{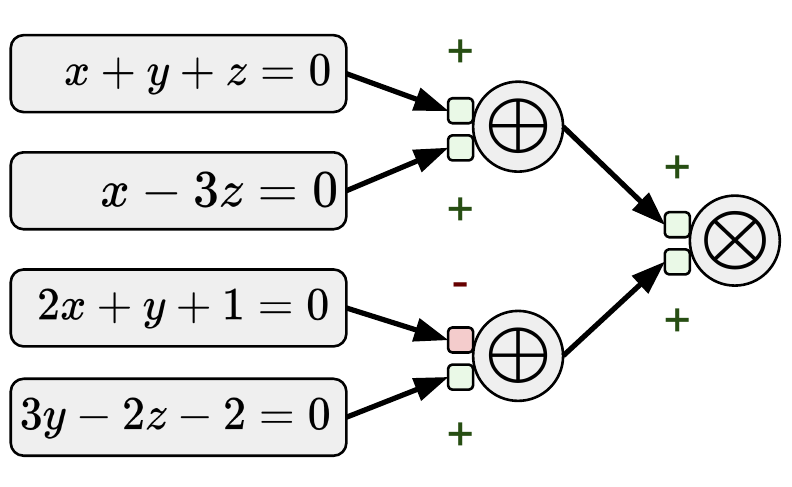}
\setlength{\abovecaptionskip}{10pt}
\caption{An instance of gated CLN. ``+'' means activated (g=1) and ``-'' means deactivated (g=0). The SMT formula learned is $ \left(3y-3z-2=0\right) \land \left((x-3z=0) \lor (x+y+z=0)\right)$.}
\label{fig:gating}
\setlength{\belowcaptionskip}{10pt}
\end{figure}

Now, we formally state the procedure to retrieve the SMT formula from a gated CLN model recursively in Algorithm~\ref{alg:formulaExtraction}. Abusing notation for brevity, $\mathcal{M}_i$ in line~1 represent the output node of model $\mathcal{M}_i$ rather than the model itself, and the same applies for line 8 and line 15. 
$BuildAtomicFormula$ in line 18 is a subroutine to extract the formula for a model with no logical connectives (e.g., retrieving $x+y+z=0$ in Figure \ref{fig:gating}). The linear weights which have been learned serve as the coefficients for the terms in the equality or inequality depending on the associated activation function. \revise{Finally, we need to round the learned coefficients to integers. We first scale the coefficients so that the maximum is $1$ and then round to the nearest rational number using a maximum possible denominator. We check if each rounded invariant fits all the training data and discard the invalid ones.}

\begin{algorithm}[t]
\caption{\textbf{Formula Extraction Algorithm.}}
\label{alg:formulaExtraction}
\begin{flushleft}
\textbf{Input:} A gated CLN model $\mathcal{M}$, with input nodes $\mathcal{X}=\{x_1,x_2,...,x_n\}$ and output node $p$. \\
\textbf{Output:} An SMT formula $F$ \\
\textbf{Procedure} ExtractFormula($\mathcal{M}$)
\end{flushleft}

\begin{algorithmic}[1]
\IF {$p = T_G(\mathcal{M}_1,...,\mathcal{M}_n;g_1,...,g_n)$} 
    \STATE{F := True} 
    \FOR {$i:=1$ \TO $n$ }{
        \IF {$g_i > 0.5$}
            \STATE {$F := F \land ExtractFormula(\mathcal{M}_i)$}
       \ENDIF
    }
    \ENDFOR
\ELSIF {$p = T_G'(\mathcal{M}_1,...,\mathcal{M}_n;g_1,...,g_n)$} 
    \STATE{F := False}
    \FOR {$i:=1$ \TO $n$} {
        \IF {$g_i > 0.5$}
            \STATE {$F := F \lor ExtractFormula(\mathcal{M}_i)$}
        \ENDIF
    }
    \ENDFOR
\ELSIF {$p = 1 - \mathcal{M}_1$}
    \STATE {$F := \neg ExtractFormula(\mathcal{M}_1)$}
\ELSE 
    \STATE{$F := BuildAtomicFormula(\mathcal{M})$}
\ENDIF
\end{algorithmic}
\end{algorithm}

In Theorem \ref{thm:soundness}, we will show that the extracted SMT formula is equivalent to the gated CLN model under some constraints. 
We first introduce a property of t-norms that is defined in the original CLNs~\cite{cln2inv}.\\

\parheader{Property 1.} $\forall t\ u,\ (t>0) \land (u>0) \implies (t \otimes u > 0)$. 

The product t-norm $x \otimes y = x \cdot y$, which is used in our implementation, has this property.

Note that the hyperparameters $c_1,c_2,\epsilon,\sigma$ in Theorem \ref{thm:soundness} will be formally introduced in \S{\ref{sec:ineqMapping}} and are unimportant here. One can simply see 
$$\lim_{\substack{c_1 \rightarrow 0, c_2 \rightarrow \infty \\ \sigma \rightarrow 0, \epsilon \rightarrow 0}} \mathcal{M}(x;c_1,c_2,\sigma,\epsilon)$$ 
as the model output $\mathcal{M}(x)$.

\begin{theorem}
\label{thm:soundness}
For a gated CLN model $\mathcal{M}$ with input nodes $\{x_1,x_2,...,x_n\}$ and output node $p$, if all gating parameters $\{g_i\}$ are either 0 or 1, then using the formula extraction algorithm, the recovered SMT formula $F$ is equivalent to the gated CLN model $\mathcal{M}$. That is, $\forall x \in R^n$, 
\begin{align}
    F(x) = True \iff \lim_{\substack{c_1 \rightarrow 0, c_2 \rightarrow \infty \\ \sigma \rightarrow 0, \epsilon \rightarrow 0}} \mathcal{M}(x;c_1,c_2,\sigma,\epsilon) = 1 \label{eq:soundness1} \\
    F(x) = False \iff \lim_{\substack{c_1 \rightarrow 0, c_2 \rightarrow \infty \\ \sigma \rightarrow 0, \epsilon \rightarrow 0}} \mathcal{M}(x;c_1,c_2,\sigma,\epsilon) = 0 \label{eq:soundness2}
\end{align}
as long as the t-norm in $\mathcal{M}$ satisfies Property 1.
\end{theorem}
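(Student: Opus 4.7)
The plan is to prove Theorem~\ref{thm:soundness} by structural induction on the gated CLN model $\mathcal{M}$, mirroring the recursion of Algorithm~\ref{alg:formulaExtraction}. Both biconditionals \eqref{eq:soundness1} and \eqref{eq:soundness2} will be handled simultaneously in each case, since the inductive hypothesis needs the full two-valued limit behavior on submodels in order to push the argument through $T_G$, $T_G'$, and $\neg$.

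For the base case, $\mathcal{M}$ is atomic, built from the CLN mappings of $=,\neq,<,\leq,>,\geq$ applied to learned linear (or polynomial) forms and then composed with the appropriate activation. Using the explicit sigmoid/PBQU expressions given in \S\ref{sec:background_CLNs} and \S\ref{sec:ineqMapping}, I would verify directly that, in the joint limit $c_1\to 0,\ c_2\to\infty,\ \sigma\to 0,\ \epsilon\to 0$, the activation's output tends to $1$ exactly when the underlying linear/polynomial predicate holds and tends to $0$ exactly when it fails. This reduces to standard pointwise limits of sigmoids (for inequalities) and of the conjunction of two sigmoid bumps (for equalities), and the analogous computation for the PBQU.

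For the inductive step, suppose the claim holds for submodels $\mathcal{M}_1,\ldots,\mathcal{M}_n$. There are three cases. \textbf{Conjunction:} $p=T_G(\mathcal{M}_1,\ldots,\mathcal{M}_n;g_1,\ldots,g_n)$ with each $g_i\in\{0,1\}$. By definition $T_G$ replaces each factor with $1$ whenever $g_i=0$, so $p$ equals the classical t-norm over exactly those submodels selected by the extraction algorithm. If every selected $\mathcal{M}_i$ tends to $1$, then by the consistency axiom $t\otimes 1 = t$ applied inductively, $p\to 1$, matching $F=\bigwedge_{g_i=1}\!F_i = \mathrm{True}$. Conversely, if some selected $\mathcal{M}_i$ tends to $0$, then $t\otimes 0 = 0$ forces $p\to 0$, matching $F=\mathrm{False}$; in the remaining direction, if all selected submodels tend to values in $\{0,1\}$ and none is $0$, Property~1 guarantees the t-norm of strictly positive reals is strictly positive, which combined with the monotonicity/consistency axioms forces $p\to 1$. \textbf{Disjunction:} by De Morgan's law $T_G'(x_1,\ldots,x_n;g_1,\ldots,g_n) = 1 - T_G(1-x_1,\ldots,1-x_n;g_1,\ldots,g_n)$, so this case follows by applying the conjunction argument to the negated submodels and using the base-case's symmetry between $0$ and $1$. \textbf{Negation:} $p = 1 - \mathcal{M}_1$, so the biconditionals flip trivially by the inductive hypothesis.

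I expect the main obstacle to be the disjunction case together with the interaction between Property~1 and the two-sided limit. Property~1 tells us strictly positive t-norms stay positive, but to conclude that the limit value is exactly $1$ (not just some value in $(0,1)$) one must combine it with the consistency axiom $t \otimes 1 = t$ and the fact that the inductive hypothesis already gives two-valued limits on each $\mathcal{M}_i$; care is needed to commute the limit with the (continuous but possibly non-linear) t-norm across multiple operands, which I would do by an explicit finite induction on the arity $n$ using associativity. A secondary subtlety is that the hyperparameter limit is joint: since all four parameters tend to their respective values simultaneously and the atomic activations are continuous in each, the standard sequential-limit argument applies, but I would state this explicitly when invoking the base case inside the inductive step.
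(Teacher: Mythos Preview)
Your proposal is correct and follows essentially the same structural induction as the paper's proof in Appendix~A, with the same four cases (atomic, negation, gated t-norm, gated t-conorm) mirroring Algorithm~\ref{alg:formulaExtraction}. One small correction: your placement of Property~1 in the conjunction case is off---once the induction hypothesis gives two-valued limits $\mathcal{M}_i(x)\in\{0,1\}$, the t-norm of the selected factors is already determined by the axioms $1\otimes 1=1$ and $t\otimes 0=0$, so there is no ``value in $(0,1)$'' to rule out; the paper instead invokes Property~1 in the t-conorm case, where it is needed for the direction ``$\bigotimes_i(1-g_i\mathcal{M}_i)=0 \Rightarrow$ some factor is $0$'' (equivalently, its contrapositive role in the t-norm case would be for $\mathcal{M}(x)=0\Rightarrow$ some factor vanishes, which your case split on the $\{0,1\}$-valued IH already handles without it).
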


\begin{proof}
We prove this by induction over the formula structure considering four cases: atomic, negation, T-norm, and T-conorm. For brevity, we sketch the T-norm case here and provide the full proof in Appendix A.

\vspace{5pt}
\parheader{T-norm Case.}  If $p = T_G(\mathcal{M}_1,...,\mathcal{M}_n;g_1,...,g_n)$,
which means the final operation in $\mathcal{M}$ is a gated t-norm, we know that for each submodel $\mathcal{M}_1,...,\mathcal{M}_n$ the gating parameters are all either 0 or 1. By the induction hypothesis, for each $\mathcal{M}_i$, using Algorithm \ref{alg:formulaExtraction}, we can extract an equivalent SMT formula $F_i$ satisfying Eq. (\ref{eq:soundness1})(\ref{eq:soundness2}). Then we can prove the full model $\mathcal{M}$ and the extracted formula $F$ also satisfy Eq. (\ref{eq:soundness1})(\ref{eq:soundness2}), using the induction hypothesis and the properties and t-norms.
\end{proof}

The requirement of all gating parameters being either 0 or 1 indicates that no gate is partially activated (e.g., $g_1=0.6$). Gating parameters between 0 and 1 are acceptable during model fitting but should be eliminated when the model converges. In practice this is achieved by gate regularization which will be discussed in \S{\ref{sec:gcln-arch}}.

Theorem \ref{thm:soundness} guarantees the soundness of the gating methodology with regard to discrete logic. Since the CLN architecture is composed of operations that are sound with regard to discrete logic, this property is preserved when gated t-norms and t-conorms are used in the network. 

Now the learnable parameters of our gated CLN include both linear weights $W$ as in typical neural networks, and the gating parameters $\{g_i\}$, so the model can represent a large family of SMT formulas. Given a training set $X$, when the gated CLN model $\mathcal{M}$ is trained to $\mathcal{M}(x)=1$ for all $x \in X$, then from Theorem \ref{thm:soundness} the recovered formula $F$ is guaranteed to hold true for all the training samples. That is, $\forall x \in X,F(x)=True$.

\subsection{Parametric Relaxation for Inequalities}
\label{sec:ineqMapping}

For learned inequality constraints to be useful in verification, they usually need to constrain the loop behavior as tightly as possible. In this section, we define a CLN activation function, the \emph{bounds learning activation}, which naturally learns tight bounds during training while maintaining the soundness guarantees of the CLN mapping to SMT. 
\begin{align}
\label{eq:inequalityMapping}
\C(t \geq u) \triangleq \left\{\begin{array}{lr}
    \frac{c_1^2}{(t-u)^2+c_1^2} & t < u \\
    \frac{c_2^2}{(t-u)^2+c_2^2} & t \geq u 
\end{array}\right.
\end{align}%
\noindent{}Here $c_1$ and $c_2$ are two constants. The following limit property holds.
\begin{align*}
\lim_{\substack{c_1 \rightarrow 0 \\ c_2 \rightarrow \infty}} S(t \geq u) = \left\{\begin{array}{lr}
0 & t < u \\
1 & t \geq u
\end{array}\right.
\end{align*}
Intuitively speaking, when $c_1$ approaches 0 and $c_2$ approaches infinity, $\C(x \geq 0)$ will approach 
the original semantic of predicate $\geq$. 
Figure \ref{fig:piesewiseIneq} provides an illustration of our parametric relaxation for $\geq$.

Compared with the sigmoid construction in the original CLNs (Figure \ref{fig:sigmoidIneq}), 
our parametric relaxation penalizes very large $x$, where $x \geq 0$ is absolutely correct but not very informative because the bound is too weak. In general, our piecewise mapping punishes data points farther away from the boundary, thus encouraging to learn a tight bound of the samples. On the contrary, the sigmoid construction encourages samples to be far from the boundary, resulting in loose bounds which are not useful for verification.

\begin{figure}[H]
\begin{subfigure}[b]{.46\columnwidth}
    \includegraphics[width=1.05\columnwidth]{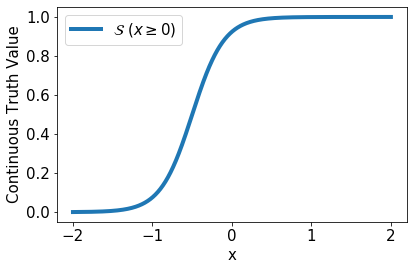}
    \caption{Plot of $\C(x \geq 0)$ with the CLNs' sigmoid construction.}
    \label{fig:sigmoidIneq}
\end{subfigure}
\quad
\begin{subfigure}[b]{.46\columnwidth}
    \includegraphics[width=1.05\columnwidth]{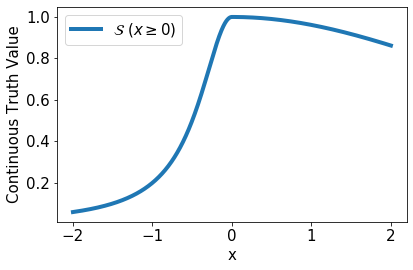}
    \caption{Plot of $\C(x \geq 0)$ with our piecewise construction.}
    \label{fig:piesewiseIneq}
\end{subfigure}
\setlength{\abovecaptionskip}{2pt}
\setlength{\belowcaptionskip}{-10pt}
\caption{\label{fig:ineqPlot}Comparison of the mapping $\C$  on $\geq$. The hyperparameters are $B=5$, $\epsilon=0.5$, $c_1=0.5$, and $c_2=5.$}
\end{figure}

\revise{Since the samples represent only a subset of reachable states from the program, encouraging a tighter bound may potentially lead to overfitting. However, we ensure soundness by later checking learned invariants via a solver. If an initial bound is too tight, we can incorporate counterexamples to the training data. Our empirical results show this approach works well in practice.}

Given a set of $n$ $k$-dimensional samples $\{[x_{11},...,x_{1k}],...,$ $[x_{n1}, ..., x_{nk}]\}$, where $x_{ij}$ denotes the value of variable $x_j$ in the $i$-th sample, we want to learn an inequality $w_1 x_1 + ... + w_k x_k + b \geq 0$ for these samples. The desirable properties of such an inequality is that it should be valid for all points, and have as tight as a fit as possible. Formally, we define a ``desired'' inequality as:
\begin{equation}
\label{eq:bounds_properties}
\begin{aligned}
\forall i \in \{1,...,n\}, w_1 x_{i1} + ... + w_k x_{ik} + b \geq 0\\
\exists j \in \{1,...,n\}, w_1 x_{j1} + ... + w_k x_{jk} + b = 0
\end{aligned}
\end{equation}

Our parametric relaxation for $\geq$ shown in Eq.~\ref{eq:inequalityMapping} can always learn an inequality which is very close to a ``desired'' one
with proper $c_1$ and $c_2$. Theorem~\ref{theorem:inequality} put this formally.

\begin{theorem}
\label{theorem:inequality}
Given a set of $n$ $k$-dimensional samples with the maximum L2-norm $l$, if $c_1 \leq 2l$ and $c_1 \cdot c_2 \geq 8 \sqrt{n} l^2$, and the weights are constrained as $\sum_{i=1}^k w_i^2 = 1$, then when the model converges, the learned inequality has distance at most $c_1/\sqrt{3}$ from a ``desired'' inequality.
\begin{proof} See Appendix B.
\end{proof}
\end{theorem}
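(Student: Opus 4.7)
The plan is to argue by contrapositive: suppose the learned hyperplane $(w,b)$ with $\sum_i w_i^2 = 1$ is a critical point of the reward $R(w,b) = \sum_{i=1}^{n} f(w \cdot x_i + b)$, where $f(z) = \C(z \geq 0)$ is the bounds activation from Eq.~\ref{eq:inequalityMapping}. Let $b^\star = -\min_i (w \cdot x_i)$, so that $(w, b^\star)$ is a \emph{desired} inequality in the sense of Eq.~\ref{eq:bounds_properties}, write $\delta = b - b^\star$, and set $z_i = w \cdot x_i + b = (w \cdot x_i + b^\star) + \delta$. The theorem then reduces to showing $|\delta| \leq c_1/\sqrt{3}$. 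Differentiating $f$ directly gives $f'(z) = -2 c_1^2 z / (z^2+c_1^2)^2 > 0$ on $z<0$ and $f'(z) = -2 c_2^2 z / (z^2+c_2^2)^2 \leq 0$ on $z \geq 0$; a short calculation shows the negative branch is increasing on $(-\infty, -c_1/\sqrt{3}]$ and peaks at $z = -c_1/\sqrt{3}$ with value $9/(8\sqrt{3}\,c_1)$.

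First I would show $|b| \leq l$ at any critical point. By Cauchy--Schwarz with $\sum_i w_i^2 = 1$, $|w \cdot x_i| \leq l$; so if $b > l$ then every $z_i > 0$ and $\partial R / \partial b$ is a strict sum of negatives, contradicting $\partial R / \partial b = 0$, and symmetrically for $b < -l$. Hence $|z_i| \leq 2l$ for every $i$. The case $\delta > c_1/\sqrt{3}$ then follows at once: $z_i \geq \delta > 0$ for all $i$, so $\partial R / \partial b < 0$ strictly, contradicting criticality.

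The substantive case is $\delta < -c_1/\sqrt{3}$. Picking $j$ with $w \cdot x_j + b^\star = 0$ gives $z_j = \delta \in [-2l, -c_1/\sqrt{3})$, and since $f'$ is increasing on this interval, $f'(z_j) \geq f'(-2l) = 4 c_1^2 l / (4l^2 + c_1^2)^2 \geq c_1^2 / (16 l^3)$, where the last step uses $c_1 \leq 2l$. For each satisfying index, the crude bound $(z_i^2 + c_2^2)^2 \geq c_2^4$ combined with $z_i \leq 2l$ yields $|f'(z_i)| \leq 2 z_i / c_2^2 \leq 4l/c_2^2$, whence $\sum_{z_i \geq 0} |f'(z_i)| \leq 4nl/c_2^2 \leq c_1^2/(16 l^3)$ by the hypothesis $c_1 c_2 \geq 8\sqrt{n}\, l^2$. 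All other violating indices contribute non-negatively, so $\partial R/\partial b > 0$, contradicting criticality (the borderline configuration $z_j = -2l$ would require $b = -l$ and $w \cdot x_j = -l$ simultaneously, and is easily ruled out separately by observing that any other $w \cdot x_i > -l$ makes the estimate strict). Finally I would convert the bound $|\delta| \leq c_1/\sqrt{3}$ into the stated distance bound using $\sum w_i^2 = 1$.

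The main obstacle is calibrating the two $|f'|$ estimates so that they combine to exactly the hypothesis $c_1 c_2 \geq 8\sqrt{n}\, l^2$. Using the uniform peak bound $9/(8\sqrt{3}\,c_2)$ on the positive branch of $|f'|$ would only give a condition of the form $c_1^2 c_2 \gtrsim n l^3$; it is the linear-in-$z_i$ bound $|f'(z_i)| \leq 2z_i/c_2^2$ (which in turn relies on $|z_i| \leq 2l$) that produces the product $c_1 c_2$ and the correct $\sqrt{n}$ dependence, paired with the $c_1 \leq 2l$ hypothesis that guarantees $z_j$ remains in the increasing portion of $f'$ so that the lower bound $f'(z_j) \geq c_1^2/(16 l^3)$ is available. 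Tying these three ingredients together cleanly is the only delicate step.
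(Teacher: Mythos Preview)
Your proposal is correct and follows essentially the same approach as the paper's proof in Appendix~B: both argue that at a critical point $\partial R/\partial b$ would have a definite sign if the bound were violated, lower-bounding the contribution of the worst-violating point by $f'(-2l)$ via monotonicity of $f'$ on $(-\infty,-c_1/\sqrt{3}]$ and upper-bounding the total negative contribution from satisfying points by $O(nl/c_2^2)$, with the hypothesis $c_1c_2\geq 8\sqrt{n}\,l^2$ making the two balance. Your framing via $\delta=b-b^\star$ and your cruder estimate $|f'(z_i)|\leq 2z_i/c_2^2$ are cosmetic simplifications of the paper's computation, not a different route.
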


Recall that $c_1$ is a small constant, so $c_1/\sqrt{3}$ can be considered as the error bound of inequality learning. Although we only proved the theoretical guarantee when learning a single inequality, our parametric relaxation for inequalities can be connected with other inequalities and equalities with conjunctions and disjunctions under a single CLN model. 

Based on our parametric relaxation for $\geq$, other inequality predicates can be defined accordingly.
\begin{align*}
    \C(t \leq u) \triangleq \left\{\begin{array}{lr}
    \frac{c_2^2}{(t-u)^2+c_2^2} & t < u \\
    \frac{c_1^2}{(t-u)^2+c_1^2} & t \geq u 
    \end{array}\right.
\end{align*}
\begin{align*}
    \C(t > u) \triangleq \C(t \geq (u + \epsilon)) &&
    \C(t < u) \triangleq \C(t \leq (u - \epsilon))
\end{align*}
where $\epsilon$ is a set small constant.

For our parametric relaxation, some axioms in classic logic just approximately rather than strictly hold (e.g., $t \leq u = \neg (t > u$)). They will strictly hold when $c_1 \rightarrow 0$ and $c_2 \rightarrow \infty$.

We reuse the Gaussian function as the parametric relaxation for equalities \cite{cln2inv}. Given a small constant $\sigma$,
\begin{align*}
    \C(t=u) \triangleq \exp{(-\frac{(t-u)^2}{2\sigma^2})}
\end{align*}

\subsection{Fractional Sampling}
In some cases, the samples generated from the original program are insufficient to learn the correct invariant due to dominating growth of some terms (higher-order terms in particular) or limited number of loop iterations. To generate more fine-grained yet valid samples, we perform \emph{Fractional Sampling} to relax the program semantics to continuous functions without violating the loop invariants by varying the initial value of program variables. The intuition is as follows.

Any numerical loop invariant $I$ can be viewed as a predicate over program variables \revise{$V$}  initialized with \revise{$V_0$} such that
\begin{align}
\label{eq:numericalInv}
    \forall V, \ V_0 \mapsto^* V \implies I(V)
\end{align}
where $V_0 \mapsto^* V$ means starting from initial values $V_0$ and executing the loop for 0 or more iterations ends with values $V$ for the variables.

Now we relax the initial values $X_0$ and see them as input variables $V_I$, which may carry arbitrary values. The new loop program will have variables $V \cup V_I$. Suppose we can learn an invariant predicate $I'$ for this new program, i.e.,
\begin{align}
    \label{eq:relaxedNumericalInv}
    \forall V_I \ V, \ V_I \mapsto^* V \implies I'(V_I \cup V)
\end{align}
Then let $V_I=V_0$, Eq. (\ref{eq:relaxedNumericalInv}) will become
\begin{align}
\label{eq:f'tof}
\forall V, \ V_0 \mapsto^* V \implies I'(V_0 \cup V)
\end{align}
Now $V_0$ is a constant, and $I'(V_0 \cup V)$ satisfies Eq. (\ref{eq:numericalInv}) thus being a valid invariant for the original program. In fact, if we learn predicate $I'$ successfully then we have a more general loop invariant that can apply for any given initial values.

Figure \ref{fig:fracSampling} shows how Fractional Sampling can generate more fine-grained samples with different initial values, making model fitting much easier in our data-driven learning system. The correct loop invariant for the program in Figure \ref{fig:ps4} is $$ (4 x = y^4 + 2 y^3 + y^2) \land (y \leq k) $$ To learn the equality part $(4 x = y^4 + 2 y^3 + y^2)$, if we choose $maxDeg=4$ and apply normal sampling, then six terms $\{1,y,y^2,y^3,y^4,x\}$ will remain after the heuristic filters in \S{\ref{sec:termSelection}}. 
Figure \ref{fig:withoutFracSampling}  shows a subset of training samples without Fractional Sampling (the column of term 1 is omitted).

\begin{figure}
\quad
\begin{subfigure}{.35\columnwidth}
\begin{lstlisting}[style=myC,numbers=none,basicstyle=\linespread{0.9}\ttfamily\footnotesize]
//pre: x = y = 0 
//     /\ k >= 0
while (y < k) {
  y++;
  x += y * y * y;
}
//post: 4x == k^2
//    * (k + 1)^2 
\end{lstlisting}
\vspace{7pt}
\caption{\label{fig:ps4} The ps4 program \\ in the benchmark.}
\end{subfigure}
\hspace{10pt}
\begin{subfigure}{.5\columnwidth}
\begin{tabular}{cccccc}
    \toprule
    x & y & $y^2$ & $y^3$ & $y^4$ \\
    \midrule
    0 & 0 & 0 & 0 & 0 \\
    1 & 1 & 1 & 1 & 1 \\
    9 & 2 & 4 & 8 & 16 \\
    36 & 3 & 9 & 27 & 81 \\
    100 & 4 & 16 & 64 & 256 \\
    225 & 5 & 25 & 125 & 625 \\
    \bottomrule
\end{tabular}
\vspace{2pt}
\caption{\label{fig:withoutFracSampling} Training data generated \\ without Fractional Sampling.}
\end{subfigure}
\begin{subfigure}{\columnwidth}
\vspace{12pt}
\resizebox{\columnwidth}{!}{
\begin{tabular}{rrrrrrrrrr}
    \toprule
    x & y & $y^2$ & $y^3$ & $y^4$ & $x_0$ & $y_0$ & $y_0^2$ & $y_0^3$ & $y_0^4$  \\
    \midrule
    -1 & -0.6 & 0.36 & -0.22 & 0.13 & -1 & -0.6 & 0.36 & -0.22 & 0.13 \\
    -0.9 & 0.4 & 0.16 & 0.06 & 0.03 & -1 & -0.6 & 0.36 & -0.22 & 0.13 \\
    1.8& 1.4 & 1.96 & 2.74 & 3.84 & -1 & -0.6 & 0.36 & -0.22 & 0.13 \\
    0 & -1.2 & 1.44 & -1.73 & 2.07 & 0 & -1.2 & 1.44 & -1.73 & 2.07 \\
    0 & -0.2 & 0.04 & -0.01 & 0.00 & 0 & -1.2 & 1.44 & -1.73 & 2.07 \\
    0.5 & 0.8 & 0.64 & 0.52 & 0.41 & 0 & -1.2 & 1.44 & -1.73 & 2.07 \\
    \bottomrule
\end{tabular}
}
\caption{\revise{Training data generated with fractional sampling.}}
\label{fig:withFracSampling}
\end{subfigure}
\caption{\label{fig:fracSampling} An example of Fractional Sampling.}
\end{figure}


When $y$ becomes larger, the low order terms $1,y,$ and $y^2$ become increasingly negligible because they are significantly smaller than the dominant terms $y^4$ and $x$. In practice we observe that the coefficients for $x^4$ and $y$ can be learned accurately but not for $1,y,y^2$. To tackle this issue, we hope to generate more samples around $y=1$ where all terms are on the same level. Such samples can be easily generated by feeding more initial values around $y=1$ using Fractional Sampling. Table \ref{fig:withFracSampling} shows some generated samples from $x_0=-1, y_0=-0.6$ and $x_0=0, y_0=-1.2$.

Now we have more samples where terms are on the same level, making the model easier to converge to the accurate solution. Our gated CLN model can correctly learn the relaxed invariant $4 x - y^4 - 2 y^3 - y^2 - 4 x_0 + y_0^4 + 2 y_0^3 + y_0^2 = 0$. Finally we return to the exact initial values $x_0=0, y_0=0$, and the correct invariant for the original program will appear $4 x - y^4 - 2 y^3 - y^2 = 0$.

Note that for convenience, in Eq. (\ref{eq:numericalInv})(\ref{eq:relaxedNumericalInv})(\ref{eq:f'tof}), we assume all variables are initialized in the original program and all are relaxed in the new program. However, the framework can easily extends to programs with uninitialized variables, or we just want to relax a subset of initialized variables. \revise{Details on how fractional sampling is incorporated in our system are provided in \S\ref{sec:frac-samp-impl}.}

\section{Nonlinear Invariant Learning}
\label{sec:methodology}

In this section, we describe our overall approach for nonlinear loop invariant inference. We first describe our methods for stable CLN training on nonlinear data. We then give an overview of our model architecture and how we incorporate our inequality activation function to learn inequalities. Finally, we show how we extend our approach to also learn invariants that contain external functions. 

\subsection{Stable CLN Training on Nonlinear Data}

Nonlinear data causes instability in CLN training due to the large number of terms and widely varying magnitudes it introduces. We address this by modifying the CLN architecture to normalize both inputs and weights on a forward execution. We then describe how we implement term dropout, which helps the model learn precise SMT coefficients.

\subsubsection{Data Normalization}

Exceedingly large inputs cause instability and prevent the CLN model from converging to precise SMT formulas that fit the data. We therefore modify the CLN architecture such that it rescales its inputs so the L2-norm equals a set value $l$. In our implementation, we used $l=10$. 
\begin{table}[b]
\caption{Training data after normalization for the program in Figure \ref{fig:ineq_ex}, which computes the integer square root.}
\begin{tabular}{cccccccc}
    \toprule
    1 & a & t & ... & $as$ & $t^2$ & $st$\\
    \midrule
    0.70 & 0 & 0.70 & & 0 & 0.70 & 0.70 \\
    0.27 & 0.27 & 0.81 & & 1.08 & 2.42 & 3.23 \\
    0.13 & 0.25 & 0.63 & & 2.29 & 3.17 & 5.71 \\
    0.06 & 0.19 & 0.45 & & 3.10 & 3.16 & 7.23 \\
    \bottomrule
\end{tabular}
\label{tbl:dataNorm}
\end{table}

We take the program in Figure \ref{fig:ineq_ex} as an example. The raw samples before normalization is shown in Figure \ref{fig:workflowSamples}. The monomial terms span in a too wide range, posing difficulty for network training. With data normalization, each sample (i.e., each row) is proportionally rescaled to L2-norm 10. 
The normalized samples are shown in Table \ref{tbl:dataNorm}.

Now the samples occupy a more regular range. Note that data normalization does not violate model correctness. If the original sample $(t_1,t_2,...t_k)$ satisfies the equality $w_1t_1 + w_2t_2 + ... + w_kt_k = 0$ (note that $t_i$ can be a higher-order term), so does the normalized sample and vice versa. The same argument applies to inequalities.

\subsubsection{Weight Regularization}

For both equality invariant $w_1 x_1 + ... + w_m x_m + b = 0$ and inequality invariant $w_1 x_1 + ... + w_m x_m + b \geq 0$, $w_1=...=w_m=b=0$ is a true solution. To avoid learning this trivial solution, we require at least one of $\{w_1, ..., w_m\}$ is non-zero. A more elegant way is to constrain the $L^p$-norm
of the weight vector to constant 1. In practice we choose L2-norm as we did in Theorem \ref{theorem:inequality}. The weights are constrained to satisfy
\[w_1^2 + ... + w_m^2 = 1\]

\subsubsection{Term Dropout}
\label{sec:termSelection}

Given a program with three variables $\{x,y,z\}$ and $maxDeg=2$, we will have ten candidate terms $\{1,x,y,z,x^2,y^2,z^2,xy,xz,$ $yz\}$.
The large number of terms poses difficulty for invariant learning, and the loop invariant in a real-world program is unlikely to contain all these terms. We use two methods to select terms. First the growth-rate-based heuristic in \cite{sharma2013data} is adopted to filter out unnecessary terms. Second we apply a random dropout to discard terms before training.

Dropout is a common practice in neural networks to avoid overfitting and improve performance. Our dropout is randomly predetermined before the training, which is different from the typical weight dropout in deep learning~\cite{srivastava2014dropout}. Suppose after the growth-rate-based filter, seven terms $\{1,x,$ $y,z,x^2,y^2,xy\}$ remain. 
Before the training, each input term to a neuron may be
discarded with probability $p$.

The importance of dropout is twofold. First it further reduces the number of terms in each neuron. Second it encourages G-CLN to learn more simple invariants. For example, if the desired invariant is $(x-y-1=0) \land (x^2-z=0)$, then a neuron may learn their linear combination (e.g., $2x-2y-2+x^2-z=0$) which is correct but not human-friendly. If the term $x$ is discarded in one neuron then that neuron may learn $x^2-z=0$ rather than $2x-2y-2+x^2-z=0$. 
Similarly, if the terms $x^2$ and $xy$ are discarded in another neuron, then that neuron may learn $x-y-1=0$. Together, the entire network consisting of both neurons will learn the precise invariant.

Since the term dropout is random, a neuron may end up having no valid invariant to learn (e.g., both $x$ and $x^2$ are discarded in the example above). But when gating (\S\ref{sec:gating}) is adopted, this neuron will be deactivated and remaining neurons may still be able to learn the desired invariant. More details on gated model training will be provided in \S{\ref{sec:gcln-arch}}.



\subsection{Gated CLN Invariant Learning}

Here we describe the Gated CLN architecture and how we incorporate the 
bounds learning activation function to learn general nonlinear loop invariants. 

\subsubsection{Gated CLN Architecture}
\label{sec:gcln-arch}

\begin{figure}[tbp]
\centering
\includegraphics[width=\linewidth]{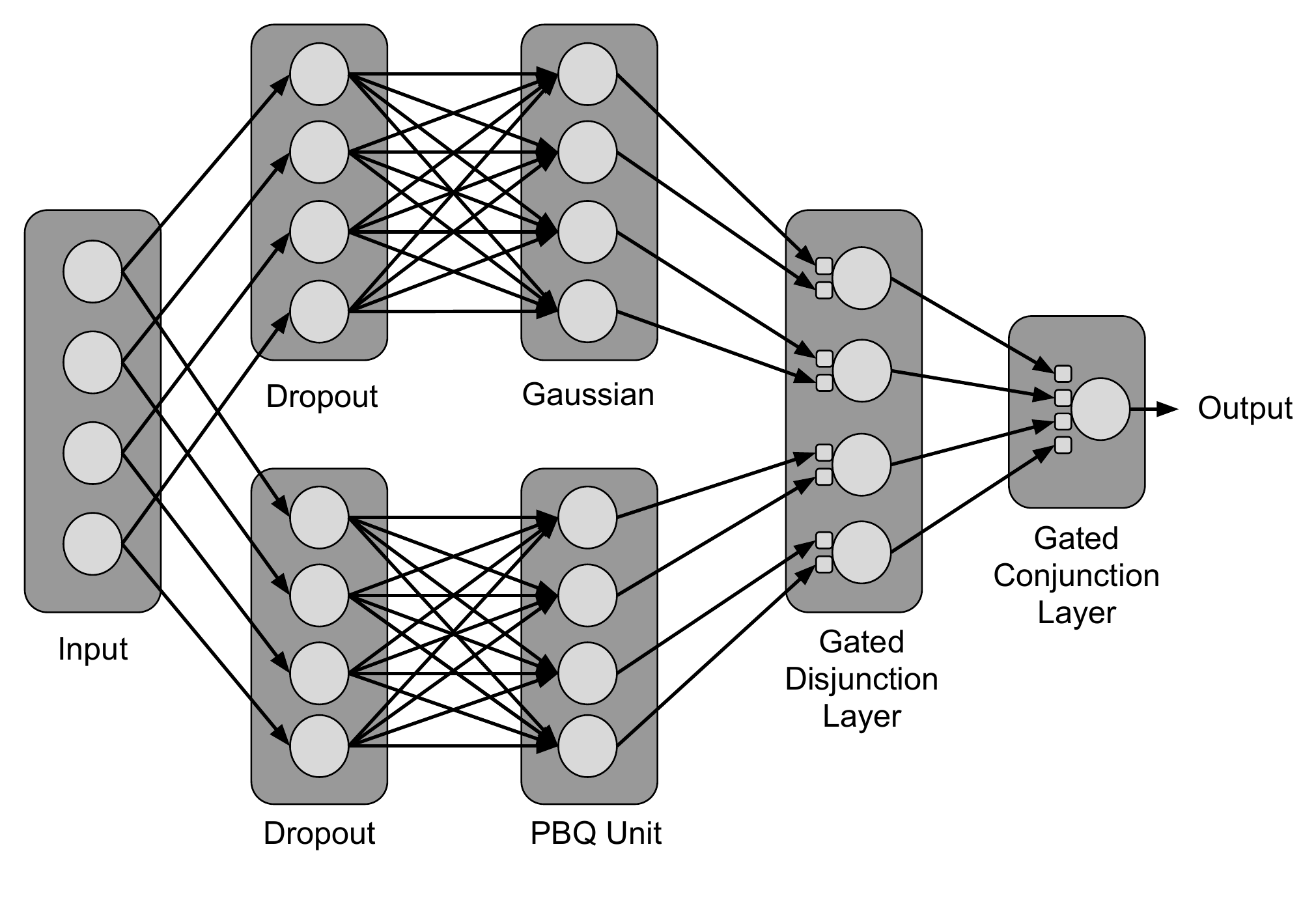}
\setlength{\abovecaptionskip}{-5pt}
\caption{Diagram of G-CLN model. Additional disjunction and conjunction layers may be added to learn more complex SMT formulas.}
\setlength{\belowcaptionskip}{-10pt}
\label{fig:nn_arch}
\end{figure}
\parheader{Architecture. }
In \S{3.1}, we introduce gated t-norm and gated t-conorm, and illustrate how they can be integrated in CLN architecture. Theoretically, the gates can cascade to many layers, while in practice, we use a gated t-conorm layer representing logical OR followed by a gated t-norm layer representing logical AND, as shown in Figure \ref{fig:nn_arch}. The SMT formula extracted from such a gated CLN architecture will be in conjunctive normal form (CNF). In other words, G-CLN is parameterized by m and n, where the underlying formula can be a conjunction of up to m clauses, where each clause is a disjunction of n atomic clauses. In the experiments we set m=10, n=2.


\parheader{Gated CLN Training. }
For the sake of discussion, consider a gated t-norm with two inputs. We note that the gating parameters $g_1$ and $g_2$ have an intrinsic tendency to become 0 in our construction. When $g_1=g_2=0$, $T_G(x,y;g_1,g_2)=1$ regardless of the truth value of the inputs $x$ and $y$. So when training the gated CLN model, we apply regularization on $g_1,g_2$ to penalize small values.
Similarly, for a gated t-conorm, the gating parameters $g_1,g_2$ have an intrinsic tendency to become 1 because $x \oplus y$ has a greater value than $x$ and $y$. To resolve this we apply regularization pressure on $g_1,g_2$ to penalize close-to-1 values.

In the general case, given training set $X$ and \revise{gate regularization parameters $\lambda_1,\lambda_2$}, the model will learn to minimize the following loss function with regard to the linear weights $W$ and gating parameters $G$,
\begin{align*}
\mathcal{L}(X;W,G) = &\sum_{x \in X} (1-\mathcal{M}(\mathbf{x};W,G))\\
&+ \lambda_1\sum_{g_i \in T_G} (1-g_i) + \lambda_2\sum_{g_i \in T_G'} g_i
\end{align*}

\noindent By training the G-CLN with this loss formulation, the model tends to 
learn a formula $F$ satisfying each training sample (recall $F(x)=True \Leftrightarrow \mathcal{M}(x)=1$ in \S{\ref{sec:gating}}).
Together, gating and regularization prunes off poorly learned clauses, while preventing the network from pruning off too aggressively.
When the training converges, all the gating parameters will be very close to either 1 or 0, indicating the participation of the clause in the formula. The invariant is recovered using Algorithm 1.



\subsubsection{Inequality Learning}
\label{sec:ineq_learning}

\begin{figure}
\centering
\begin{subfigure}{\linewidth}
\includegraphics[width=\textwidth]{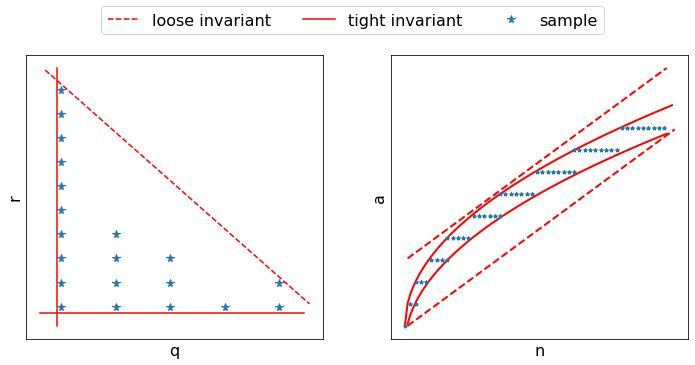}
\end{subfigure}
\;
\begin{subfigure}{0.4\linewidth}
\caption{\label{fig:bounds_final} Learned inequality bounds.}
\end{subfigure}
\;\;
\begin{subfigure}{0.4\linewidth}
\caption{\label{fig:bounds_sqrt} Learned inequality bounds on \texttt{sqrt}.}
\end{subfigure}
\setlength{\abovecaptionskip}{10pt}
\caption{\label{fig:bounds_examples}Examples of 2 dimensional bound fitting. }
\setlength{\belowcaptionskip}{-60pt}
\end{figure}

Inequality learning largely follows the same procedure as equality learning with two differences. First, we use the PBQU activation (i.e., the parametric relaxation for $\geq$) introduced in \S{\ref{sec:ineqMapping}}, instead of the Gaussian activation function (i.e., the parametric relaxation for $=$). This difference is shown in Figure \ref{fig:nn_arch}. As discussed in \S{\ref{sec:ineqMapping}}, the PBQU activation will learn tight inequalities rather than loose ones.



\revise{Second, we structure the dropout on inequality constraints to consider all possible combinations of variables up to a set number of terms and maximum degree (up to 3 terms and 2nd degree in our evaluation). We then train the model following the same optimization used in equality learning, and remove constraints that do not fit the data based on their PBQU activations after the model has finished training.}

When extracting a formula from the model we remove poorly fit learned bounds that have PBQU activations below a set threshold.
As discussed in \S\ref{sec:ineqMapping}, PBQU activations penalizes points that are farther from its bound. The tight fitting bounds in Figures \ref{fig:bounds_final} and \ref{fig:bounds_sqrt} with solid red lines have PBQU activations close to 1, while loose fitting bounds with dashed lines have PBQU activations close to 0. \revise{After selecting the best fitting bounds, we check against the loop specification and remove any remaining constraints that are unsound. If the resulting invariant is insufficient to prove the postcondition, the model is retrained using the counterexamples generated during the specification check.}

\subsection{External Function Calls}
In realistic applications, loops are not entirely whitebox and may contain calls to external functions for which the signature is provided but not the code body. In these cases, external functions may also appear in the loop invariant. To address these cases, when an external function is present, we sample it during loop execution. 
To sample the function, we execute it with all combinations of variables in scope during sampling that match its call signature.

For example, the function $gcd: \mathbb{Z} \times \mathbb{Z} \rightarrow \mathbb{Z}$, for greatest common divisor, is required in the invariant for four of the evaluation problems that compute either greatest common divisor or least common multiple: (\texttt{egcd2}, \texttt{egcd3}, \texttt{lcm1}, and \texttt{lcm2}).
In practice, we constrain our system to binary functions, 
but it is not difficult to utilize static-analysis to extend the support to more complex external function calls. This procedure of constructing terms containing external function calls
is orthogonal to our training framework. 

\subsection{Fractional Sampling Implementation}
\label{sec:frac-samp-impl}

\revise{We apply fractional sampling on a per-program basis when we observe the model is unable to learn the correct polynomial from the initial samples. We first sample on $0.5$ intervals, then $0.25$, etc. until the model learns a correct invariant. We do not apply fractional sampling to variables involved in predicates and external function calls, such as gcd. In principle, predicate constraints can be relaxed to facilitate more general sampling. We will investigate this in future work.}

\revise{Among all the programs in our evaluation, only two of them, ps5 and ps6, require fractional sampling. For both of them sampling on $0.5$ intervals is sufficient to learn the correct invariant, although more fine grained sampling helps the model learn a correct invariant more quickly. The cost associated with fractional sampling is small ($<5s$).}

\section{Evaluation}
\label{sec:evaluation}

We evaluate our approach on NLA, a benchmark of common numerical algorithms with nonlinear invariants. We first perform a comparison with two prior works, Numinv and PIE, that use polynomial equation solving and template learning respectively. We then perform an ablation of the methods we introduce in this paper. Finally, we evaluate the stability of our approach against a baseline CLN model.

\vspace{2mm}

\parheader{Evaluation Environment.} The experiments described in this section were conducted on an Ubuntu 18.04 server with an Intel XeonE5-2623 v4 2.60GHz CPU, 256Gb of memory, and an Nvidia GTX 1080Ti GPU.

\vspace{2mm}

\parheader{System Configuration. }
We implement our method with the PyTorch Framework and use the Z3 SMT solver to validate the correctness of the inferred loop invariants. \revise{For the four programs involving greatest common divisors, we manually check the validity of the learned invariant since gcd is not supported by z3.} We use a G-CLN model with the CNF architecture described in \S\ref{sec:gcln-arch}, with a conjunction of ten clauses, each with up to two literals. \revise{We use adaptive regularization on the CLN gates. $\lambda_1$ is set to $(1.0,0.999,0.1)$, which means that $\lambda_1$ is initialized as $1.0$, and is multiplied by $0.999$ after each epoch until it reaches the threshold $0.1$. Similarly, $\lambda_2$ is set to $(0.001,1.001,0.1)$. We try three maximum denominators, $(10,15,30)$, for coefficient extraction in \S\ref{sec:gating}. For the parametric relaxation in \S\ref{sec:ineqMapping}, we set $\sigma=0.1, c_1 = 1, c_2 = 50$. The default dropout rate in \S\ref{sec:termSelection} is $0.3$, and will decrease by $0.1$ after each failed attempt until it reaches $0$. We use the Adam optimizer with learning rate $0.01$, decay $0.9996$, and max epoch $5000$.}

\subsection{Polynomial Invariant Dataset}
We evaluate our method on a dataset of programs that require nonlinear polynomial invariants \cite{nguyen2012NLA}. The problems in this dataset represent various numerical algorithms ranging from modular division and greatest common denominator ($gcd$) to computing geometric and power series. These algorithms involve up to triply nested loops and sequential loops, which we handle by predicting all the requisite invariants using the model before checking their validity. We sample within input space of the whole program just as we do with single loop problems and annotate the loop that a recorded state is associated with. The invariants involve up to $6^{th}$ order polynomial and up to thirteen variables. 

 \begin{table}
 \centering
 \small
 \caption{ \label{tbl:poly_eval}Table of problems requiring nonlinear polynomial invariant from NLA dataset. 
 We additionally tested Code2Inv on the same problems as PIE and it fails to solve any within 1 hour. Numinv results are based on Table 1 in \cite{nguyen2017numinv}. G-CLN solves 26 of 27 problems with an average execution time of 53.3 seconds.}
 \begin{tabular}{p{1.4cm}p{1.0cm}p{0.9cm}p{0.8cm}p{1.1cm}p{1.0cm}}
 \toprule
 \textbf{Problem} & 
 \textbf{Degree} & 
 \textbf{\# Vars} &
 \textbf{PIE} &
 {\specialcellbold{NumInv}} &  
 {\specialcellbold{G-CLN}} \\
 \midrule
divbin & 2 & 5 & - & \cmark & \cmark \\
cohendiv & 2 & 6 & - & \cmark & \cmark \\
mannadiv  & 2 & 5 & \xmark & \cmark & \cmark \\
hard  & 2 & 6 & - & \cmark & \cmark \\
sqrt1  & 2 & 4 & - & \cmark & \cmark \\
dijkstra & 2 & 5 & - & \cmark & \cmark \\
cohencu & 3 & 5 & - & \cmark & \cmark \\
egcd & 2 & 8 & - & \cmark & \cmark \\
egcd2 & 2 & 11 & - & \xmark & \cmark \\
egcd3 & 2 & 13 & - & \xmark & \cmark \\
prodbin & 2 & 5 & - & \cmark & \cmark \\
prod4br & 3 & 6 & \xmark & \cmark & \cmark \\
fermat1 & 2 & 5 & - & \cmark & \cmark \\
fermat2 & 2 & 5 & - & \cmark & \cmark \\
freire1 & 2 & 3 & - & \xmark & \cmark \\
freire2 & 3 & 4 & - & \xmark & \cmark \\
knuth & 3 & 8 & - & \cmark & \xmark \\
lcm1 & 2 & 6 & - & \cmark & \cmark \\
lcm2 & 2 & 6 & \xmark & \cmark & \cmark \\
geo1 & 2 & 5 & \xmark & \cmark & \cmark \\
geo2 & 2 & 5 & \xmark & \cmark & \cmark \\
geo3 & 3 & 6 & \xmark & \cmark & \cmark \\
ps2 & 2 & 4 & \xmark & \cmark & \cmark \\
ps3 & 3 & 4 & \xmark & \cmark & \cmark \\
ps4 & 4 & 4 & \xmark & \cmark & \cmark \\
ps5 & 5 & 4 & - & \cmark & \cmark \\
ps6 & 6 & 4 & - & \cmark & \cmark \\
 \bottomrule
 \end{tabular}
 \end{table}

 \begin{table}
 \centering
 \small
 \caption{ \label{tbl:abl}Table with ablation of various components in the G-CLN model. Each column reports which problems can be solved with G-CLN when that feature ablated.
 }
 \resizebox{0.95\columnwidth}{!}{
 \begin{tabular}{p{1.1cm}p{0.8cm}p{0.8cm}p{0.8cm}p{1.2cm}p{0.9cm}}
 \toprule
 \textbf{Problem} & 
 \textbf{Data} & 
 \textbf{Weight} &
 {\specialcellbold{Drop-}} 
 &  {\specialcellbold{Frac.}}
 &  {\specialcellbold{Full}}
 \\
 &   \textbf{Norm.} & \textbf{Reg.} &\textbf{out} 
 &  {\specialcellbold{Sampling}}
 & \textbf{Method}
\\
 \midrule
divbin    & \xmark & \xmark  & \cmark & \cmark & \cmark\\
cohendiv  & \xmark & \xmark  & \xmark & \cmark & \cmark\\
mannadiv  & \xmark & \xmark  & \cmark & \cmark & \cmark\\
hard      & \xmark & \xmark  & \cmark & \cmark & \cmark\\
sqrt1     & \xmark & \xmark  & \xmark & \cmark & \cmark\\
dijkstra  & \xmark & \xmark  & \cmark & \cmark & \cmark\\
cohencu   & \xmark & \xmark  & \cmark & \cmark & \cmark\\
egcd      & \xmark & \cmark  & \xmark & \cmark & \cmark\\
egcd2     & \xmark & \cmark  & \xmark & \cmark & \cmark\\
egcd3     & \xmark & \cmark  & \xmark & \cmark & \cmark\\
prodbin   & \xmark & \cmark  & \cmark & \cmark & \cmark\\
prod4br   & \xmark & \cmark  & \cmark & \cmark & \cmark\\
fermat1   & \xmark & \cmark  & \cmark & \cmark & \cmark\\
fermat2   & \xmark & \cmark  & \cmark & \cmark & \cmark\\
freire1   & \xmark & \cmark  & \cmark & \cmark & \cmark\\
freire2   & \xmark & \cmark  & \xmark & \cmark & \cmark\\
knuth     & \xmark & \xmark  & \xmark & \xmark & \xmark\\
lcm1      & \xmark & \cmark  & \cmark & \cmark & \cmark\\
lcm2      & \xmark & \cmark  & \cmark & \cmark & \cmark\\
geo1      & \xmark & \cmark  & \cmark & \cmark & \cmark\\
geo2      & \xmark & \cmark  & \cmark & \cmark & \cmark\\
geo3      & \xmark & \cmark  & \cmark & \cmark & \cmark\\
ps2       & \cmark & \xmark  & \cmark & \cmark & \cmark\\
ps3       & \xmark & \xmark  & \cmark & \cmark & \cmark\\
ps4       & \xmark & \xmark  & \cmark & \cmark & \cmark\\
ps5       & \xmark & \xmark  & \cmark & \xmark & \cmark\\
ps6       & \xmark & \xmark  & \cmark & \xmark & \cmark\\
 \bottomrule
 \end{tabular}
 }
 \end{table}
 
 \vspace{2mm}
 
\parheader{Performance Comparison.}
Our method is able to solve 26 of the 27 problems as shown in Table \ref{tbl:poly_eval}, while NumInv solves 23 of 27. Our average execution time was 53.3 seconds, which is a minor improvement to NumInv who reported 69.9 seconds. We also evaluate LoopInvGen (PIE) on a subset of the simpler problems which are available in a compatible format\footnote{LoopInvGen uses the SyGuS competition format, which is an extended version of smtlib2.}. It was not able to solve any of these problems before hitting a 1 hour timeout. 
In Table \ref{tbl:poly_eval}, we indicate solved problems with \cmark, unsolved problems with \xmark, \space and problems that were not attempted with $-$.

The single problem we do not solve, \texttt{knuth}, is the most difficult problem from a learning perspective.
The invariant for the problem, $(d^2*q - 4*r*d + 4*k*d - 2*q*d + 8*r == 8*n) \land (mod (d,  2) == 1)$, is one of the most complex in the benchmark. 
Without considering the external function call to $mod$ (modular division), there are already 165 potential terms of degree at most 3, nearly twice as many as next most complex problem in the benchmark, making it difficult to learn a precise invariant with gradient based optimization. We plan to explore better initialization and training strategies to scale to complex loops like \texttt{knuth} in future work.

Numinv is able to find the equality constraint in this invariant because its approach is specialized for equality constraint solving. However, we note that NumInv only infers octahedral inequality constraints and does not in fact infer the nonlinear and 3 variable inequalities in the benchmark.

We handle the $mod$ binary function successfully in \texttt{fermat1} and \texttt{fermat2} indicting the success of our model in supporting external function calls. Additionally, for four problems (\texttt{egcd2}, \texttt{egcd3}, \texttt{lcm1}, and \texttt{lcm2}), we incorporate the $gcd$ external function call as well. 
 
 \subsection{Ablation Study}

We conduct an ablation study to demonstrate the benefits gained by the normalization/regularization techniques and term dropouts as well as fractional sampling. 
Table \ref{tbl:abl} notes that data normalization is crucial for nearly all the problems, especially for preventing high order terms from dominating the training process. 
Without weight regularization, the problems which involve inequalities over multiple variables cannot be solved; 7 of the 27 problems cannot be solved without dropouts, which help avoid the degenerate case where the network learns repetitions of the same atomic clause. Fractional sampling helps to solve high degree ($5^{th}$ and $6^{th}$ order) polynomials as the distance between points grow fast.

\subsection{Stability}

We compare the stability of the gated CLNs with standard CLNs as proposed in \cite{cln2inv}. Table \ref{tbl:stability} shows the result. We ran the two CLN methods without automatic restart 20 times per problem and compared the probability of arriving at a solution. We tested on the example problems described in \cite{cln2inv} with disjunction and conjunction of equalities, two problems from Code2Inv, as well as ps2 and ps3 from NLA. As we expected, our regularization and gated t-norms vastly improves the stability of the model as clauses with poorly initialized weights can be ignored by the network. We saw improvements across all the six problems, with the baseline CLN model having an average convergence rate of $58.3\%$, and the G-CLN converging $97.5\%$ of the time on average. 
\subsection{Linear Invariant Dataset}

We evaluate our system on the Code2Inv benchmark \cite{si2018learning} of 133 linear loop invariant inference problems with source code and SMT loop invariant checks. We hold out 9 problems shown to be theoretically unsolvable in \cite{cln2inv}. Our system finds correct invariants for all remaining 124 theoretically solvable problems in the benchmark in under 30s.

\begin{figure}[t]
 \begin{table}[H]
 \centering
 \small
 \caption{ \label{tbl:stability}Table comparing the stability of CLN2INV with our method. The statistics reported are over 20 runs per problem with randomized initialization. }
 \begin{tabular}{c*{2}{c}}
 \toprule
 \textbf{Problem} & 
 \textbf{Convergence Rate} & 
 \textbf{Convergence Rate} 
 \\
 &   \textbf{of CLN} & \textbf{of G-CLN} 
\\
 \midrule
Conj Eq        & 75\% & 95\%  \\
Disj Eq        & 50\% & 100\%  \\
Code2Inv 1     & 55\% & 90\%  \\
Code2Inv 11    & 70\% & 100\%  \\
ps2            & 70\% & 100\%  \\
ps3            & 30\% & 100\%  \\
 \bottomrule
 \end{tabular}
 \end{table}
 \vspace{-30pt}
 \end{figure}

\section{Related Work}
\label{sec:related}

\parheader{Numerical Relaxations.}
Inductive logic programming (ILP) has been used to learn a logical formula consistent with a set of given data points.
More recently, efforts have focused on differentiable relaxations of ILP  for learning \citep{kimmig2012short, yang2017differentiable, evans2018learning,payani2019inductive} or program synthesis \cite{si2019synthesizing}. Other recent efforts have used formulas as input to graph and recurrent nerual networks to solve Circuit SAT problems and identify Unsat Cores \cite{amizadeh2018learning, selsam2018learning, selsam2019guiding}. FastSMT also uses a neural network select optimal SMT solver strategies \cite{balunovic2018learning}. In contrast, our work relaxes the semantics of the SMT formulas allowing us to learn SMT formulas. 

\vspace{2mm}



\parheader{Counterexample-Driven Invariant Inference.} There is a long line of work to learn loop invariants based on counterexamples. 
ICE-DT uses decision tree learning and leverages counterexamples which violate the inductive verification condition \cite{garg2016learning, zhu2018chc, garg2014ice}. Combinations of linear classifiers have been applied to learning CHC clauses \cite{zhu2018chc}.

A state-of-the-art method, LoopInvGen (PIE) learns the loop invariant using enumerative synthesis to repeatedly add data consistent clauses to strengthen the post-condition until it becomes inductive \cite{sharma2013verification, padhi2016data, loopinvgen2017}. 
For the strengthening procedure, LoopInvGen uses PAC learning, a form of boolean formula learning, to learn which combination of candidate atomic clauses is consistent with the observed data. In contrast, our system learn invariants from trace data. 

\vspace{2mm}

\parheader{Neural Networks for Invariant Inference.} Recently, 
neural networks have been applied to loop invariant inference. Code2Inv combines graph and recurrent neural networks to model the program graph and learn from counterexamples~\cite{si2018learning}.
In contrast, CLN2INV uses CLNs to learn SMT formulas for invariants directly from program data \cite{cln2inv}. We also use CLNs but incorporate gating and other improvements to be able to learn general nonlinear loop invariants.


\vspace{2mm}

\parheader{Polynomial Invariants.} There have been efforts to utilize abstract interpretation to discover polynomial invariants~\cite{rodriguez2007generating, rodriguez2004automatic}. 
More recently, \revise{Compositional Recurrance Analysis (CRA) performs analysis on abstract domain of transition formulas, but relies on over approximations that prevent it from learning sufficient invariants~\cite{farzan2015compositional, kincaid2017compositional, kincaid2017non}.} 
Data-driven methods based on linear algebra such as \textit {Guess-and-Check} are able to learn polynomial equality invariants accurately \cite{sharma2013data}. 
Guess-and-check learns equality invariants by using the polynomial kernel, 
but it cannot learn disjunctions and inequalities,
which our framework supports natively.

\textit{NumInv}~\cite{nguyen2012using,nguyen2014dig, nguyen2017numinv} uses the polynomial kernel but also learns octahedral inequalities.
NumInv sacrifices soundness for performance by replacing Z3 with KLEE, a symbolic executor, and in particular, treats invariants which lead to KLEE timeouts as valid. Our method instead is sound and learns more general inequalities than NumInv.

\vspace{-0.4cm}
\section{Conclusion}
\label{sec:conclusion}

We introduce G-CLNs, a new gated neural architecture that can learn general nonlinear loop invariants. We additionally introduce Fractional Sampling, a method that soundly relaxes program semantics to perform dense sampling, and PBQU activations, which naturally learn tight inequality bounds for verification.  We evaluate our approach on a set of 27 polynomial loop invariant inference problems and solve 26 of them, 3 more than prior work, as well as improving convergence rate to $97.5\%$ on quadratic problems, a $39.2\%$ improvement over CLN models.


\section*{Acknowledgements}
The authors are grateful to our shepherd, 
Aditya Kanade, and the anonymous reviewers for  valuable feedbacks that improved this paper significantly.
This work is sponsored in part by NSF grants CNS-18-42456,
CNS-18-01426, CNS-16-17670, CCF-1918400; ONR grant N00014-17-1-2010; an ARL Young
Investigator (YIP) award; an NSF CAREER award; a Google Faculty
Fellowship; a Capital One Research Grant;  a J.P. Morgan Faculty
Award;
a Columbia-IBM Center Seed Grant Award;
and a Qtum Foundation Research Gift. Any opinions, findings, conclusions, or recommendations that are
expressed herein are those of the authors, and do not necessarily
reflect those of the US Government, ONR, ARL, NSF, Google, Capital One
J.P. Morgan, IBM, or Qtum.

\bibliography{main}


\begin{thebibliography}{39}


\ifx \showCODEN    \undefined \def \showCODEN     #1{\unskip}     \fi
\ifx \showDOI      \undefined \def \showDOI       #1{#1}\fi
\ifx \showISBNx    \undefined \def \showISBNx     #1{\unskip}     \fi
\ifx \showISBNxiii \undefined \def \showISBNxiii  #1{\unskip}     \fi
\ifx \showISSN     \undefined \def \showISSN      #1{\unskip}     \fi
\ifx \showLCCN     \undefined \def \showLCCN      #1{\unskip}     \fi
\ifx \shownote     \undefined \def \shownote      #1{#1}          \fi
\ifx \showarticletitle \undefined \def \showarticletitle #1{#1}   \fi
\ifx \showURL      \undefined \def \showURL       {\relax}        \fi
\providecommand\bibfield[2]{#2}
\providecommand\bibinfo[2]{#2}
\providecommand\natexlab[1]{#1}
\providecommand\showeprint[2][]{arXiv:#2}

\bibitem[\protect\citeauthoryear{Amizadeh, Matusevych, and Weimer}{Amizadeh
  et~al\mbox{.}}{2019}]%
        {amizadeh2018learning}
\bibfield{author}{\bibinfo{person}{Saeed Amizadeh}, \bibinfo{person}{Sergiy
  Matusevych}, {and} \bibinfo{person}{Markus Weimer}.}
  \bibinfo{year}{2019}\natexlab{}.
\newblock \showarticletitle{Learning To Solve Circuit-{SAT}: An Unsupervised
  Differentiable Approach}. In \bibinfo{booktitle}{\emph{International
  Conference on Learning Representations}}.
\newblock
\urldef\tempurl%
\url{https://openreview.net/forum?id=BJxgz2R9t7}
\showURL{%
\tempurl}


\bibitem[\protect\citeauthoryear{Bahdanau, Cho, and Bengio}{Bahdanau
  et~al\mbox{.}}{2014}]%
        {bahdanau2014neural}
\bibfield{author}{\bibinfo{person}{Dzmitry Bahdanau},
  \bibinfo{person}{Kyunghyun Cho}, {and} \bibinfo{person}{Yoshua Bengio}.}
  \bibinfo{year}{2014}\natexlab{}.
\newblock \showarticletitle{Neural machine translation by jointly learning to
  align and translate}.
\newblock \bibinfo{journal}{\emph{arXiv preprint arXiv:1409.0473}}
  (\bibinfo{year}{2014}).
\newblock


\bibitem[\protect\citeauthoryear{Balunovic, Bielik, and Vechev}{Balunovic
  et~al\mbox{.}}{2018}]%
        {balunovic2018learning}
\bibfield{author}{\bibinfo{person}{Mislav Balunovic}, \bibinfo{person}{Pavol
  Bielik}, {and} \bibinfo{person}{Martin Vechev}.}
  \bibinfo{year}{2018}\natexlab{}.
\newblock \showarticletitle{Learning to solve SMT formulas}. In
  \bibinfo{booktitle}{\emph{Advances in Neural Information Processing
  Systems}}. \bibinfo{pages}{10317--10328}.
\newblock


\bibitem[\protect\citeauthoryear{Biere, van Maaren, and Walsh}{Biere
  et~al\mbox{.}}{2009}]%
        {handbookofSAT}
\bibfield{author}{\bibinfo{person}{A. Biere}, \bibinfo{person}{H. van Maaren},
  {and} \bibinfo{person}{T. Walsh}.} \bibinfo{year}{2009}\natexlab{}.
\newblock \bibinfo{booktitle}{\emph{Handbook of Satisfiability: Volume 185
  Frontiers in Artificial Intelligence and Applications}}.
\newblock \bibinfo{publisher}{IOS Press}, \bibinfo{address}{Amsterdam, The
  Netherlands, The Netherlands}.
\newblock
\showISBNx{1586039296, 9781586039295}


\bibitem[\protect\citeauthoryear{Blass and Gurevich}{Blass and
  Gurevich}{2001}]%
        {blass2001inadequacy}
\bibfield{author}{\bibinfo{person}{Andreas Blass} {and} \bibinfo{person}{Yuri
  Gurevich}.} \bibinfo{year}{2001}\natexlab{}.
\newblock \showarticletitle{Inadequacy of computable loop invariants}.
\newblock \bibinfo{journal}{\emph{ACM Transactions on Computational Logic
  (TOCL)}} \bibinfo{volume}{2}, \bibinfo{number}{1} (\bibinfo{year}{2001}),
  \bibinfo{pages}{1--11}.
\newblock


\bibitem[\protect\citeauthoryear{Damm, Pinto, and Ratschan}{Damm
  et~al\mbox{.}}{2005}]%
        {damm2005guaranteed}
\bibfield{author}{\bibinfo{person}{Werner Damm}, \bibinfo{person}{Guilherme
  Pinto}, {and} \bibinfo{person}{Stefan Ratschan}.}
  \bibinfo{year}{2005}\natexlab{}.
\newblock \showarticletitle{Guaranteed termination in the verification of LTL
  properties of non-linear robust discrete time hybrid systems}. In
  \bibinfo{booktitle}{\emph{International Symposium on Automated Technology for
  Verification and Analysis}}. Springer, \bibinfo{pages}{99--113}.
\newblock


\bibitem[\protect\citeauthoryear{De~Moura and Bj{\o}rner}{De~Moura and
  Bj{\o}rner}{2008}]%
        {de2008z3}
\bibfield{author}{\bibinfo{person}{Leonardo De~Moura} {and}
  \bibinfo{person}{Nikolaj Bj{\o}rner}.} \bibinfo{year}{2008}\natexlab{}.
\newblock \showarticletitle{Z3: An efficient SMT solver}. In
  \bibinfo{booktitle}{\emph{International conference on Tools and Algorithms
  for the Construction and Analysis of Systems}}. Springer,
  \bibinfo{pages}{337--340}.
\newblock


\bibitem[\protect\citeauthoryear{Evans and Grefenstette}{Evans and
  Grefenstette}{2018}]%
        {evans2018learning}
\bibfield{author}{\bibinfo{person}{Richard Evans} {and} \bibinfo{person}{Edward
  Grefenstette}.} \bibinfo{year}{2018}\natexlab{}.
\newblock \showarticletitle{Learning explanatory rules from noisy data}.
\newblock \bibinfo{journal}{\emph{Journal of Artificial Intelligence Research}}
   \bibinfo{volume}{61} (\bibinfo{year}{2018}), \bibinfo{pages}{1--64}.
\newblock


\bibitem[\protect\citeauthoryear{Farzan and Kincaid}{Farzan and
  Kincaid}{2015}]%
        {farzan2015compositional}
\bibfield{author}{\bibinfo{person}{Azadeh Farzan} {and}
  \bibinfo{person}{Zachary Kincaid}.} \bibinfo{year}{2015}\natexlab{}.
\newblock \showarticletitle{Compositional recurrence analysis}. In
  \bibinfo{booktitle}{\emph{2015 Formal Methods in Computer-Aided Design
  (FMCAD)}}. IEEE, \bibinfo{pages}{57--64}.
\newblock


\bibitem[\protect\citeauthoryear{Furia, Meyer, and Velder}{Furia
  et~al\mbox{.}}{2014}]%
        {furia2014loop}
\bibfield{author}{\bibinfo{person}{Carlo~A Furia}, \bibinfo{person}{Bertrand
  Meyer}, {and} \bibinfo{person}{Sergey Velder}.}
  \bibinfo{year}{2014}\natexlab{}.
\newblock \showarticletitle{Loop invariants: Analysis, classification, and
  examples}.
\newblock \bibinfo{journal}{\emph{ACM Computing Surveys (CSUR)}}
  \bibinfo{volume}{46}, \bibinfo{number}{3} (\bibinfo{year}{2014}),
  \bibinfo{pages}{34}.
\newblock


\bibitem[\protect\citeauthoryear{Garg, L{\"o}ding, Madhusudan, and Neider}{Garg
  et~al\mbox{.}}{2014}]%
        {garg2014ice}
\bibfield{author}{\bibinfo{person}{Pranav Garg}, \bibinfo{person}{Christof
  L{\"o}ding}, \bibinfo{person}{P Madhusudan}, {and} \bibinfo{person}{Daniel
  Neider}.} \bibinfo{year}{2014}\natexlab{}.
\newblock \showarticletitle{ICE: A robust framework for learning invariants}.
  In \bibinfo{booktitle}{\emph{International Conference on Computer Aided
  Verification}}. Springer, \bibinfo{pages}{69--87}.
\newblock


\bibitem[\protect\citeauthoryear{Garg, Neider, Madhusudan, and Roth}{Garg
  et~al\mbox{.}}{2016}]%
        {garg2016learning}
\bibfield{author}{\bibinfo{person}{Pranav Garg}, \bibinfo{person}{Daniel
  Neider}, \bibinfo{person}{Parthasarathy Madhusudan}, {and}
  \bibinfo{person}{Dan Roth}.} \bibinfo{year}{2016}\natexlab{}.
\newblock \showarticletitle{Learning invariants using decision trees and
  implication counterexamples}. In \bibinfo{booktitle}{\emph{ACM Sigplan
  Notices}}, Vol.~\bibinfo{volume}{51}. ACM, \bibinfo{pages}{499--512}.
\newblock


\bibitem[\protect\citeauthoryear{Gers, Schmidhuber, and Cummins}{Gers
  et~al\mbox{.}}{1999}]%
        {gers1999learning}
\bibfield{author}{\bibinfo{person}{Felix~A Gers}, \bibinfo{person}{J{\"u}rgen
  Schmidhuber}, {and} \bibinfo{person}{Fred Cummins}.}
  \bibinfo{year}{1999}\natexlab{}.
\newblock \showarticletitle{Learning to forget: Continual prediction with
  LSTM}.
\newblock  (\bibinfo{year}{1999}).
\newblock


\bibitem[\protect\citeauthoryear{H{\'a}jek}{H{\'a}jek}{2013}]%
        {hajek2013metamathematics}
\bibfield{author}{\bibinfo{person}{Petr H{\'a}jek}.}
  \bibinfo{year}{2013}\natexlab{}.
\newblock \bibinfo{booktitle}{\emph{Metamathematics of fuzzy logic}}.
  Vol.~\bibinfo{volume}{4}.
\newblock \bibinfo{publisher}{Springer Science \& Business Media}.
\newblock


\bibitem[\protect\citeauthoryear{Hoare}{Hoare}{1969}]%
        {hoare1969axiomatic}
\bibfield{author}{\bibinfo{person}{Charles Antony~Richard Hoare}.}
  \bibinfo{year}{1969}\natexlab{}.
\newblock \showarticletitle{An axiomatic basis for computer programming}.
\newblock \bibinfo{journal}{\emph{Commun. ACM}} \bibinfo{volume}{12},
  \bibinfo{number}{10} (\bibinfo{year}{1969}), \bibinfo{pages}{576--580}.
\newblock


\bibitem[\protect\citeauthoryear{Ioffe and Szegedy}{Ioffe and Szegedy}{2015}]%
        {ioffe2015batch}
\bibfield{author}{\bibinfo{person}{Sergey Ioffe} {and}
  \bibinfo{person}{Christian Szegedy}.} \bibinfo{year}{2015}\natexlab{}.
\newblock \showarticletitle{Batch normalization: Accelerating deep network
  training by reducing internal covariate shift}.
\newblock \bibinfo{journal}{\emph{arXiv preprint arXiv:1502.03167}}
  (\bibinfo{year}{2015}).
\newblock


\bibitem[\protect\citeauthoryear{Kimmig, Bach, Broecheler, Huang, and
  Getoor}{Kimmig et~al\mbox{.}}{2012}]%
        {kimmig2012short}
\bibfield{author}{\bibinfo{person}{Angelika Kimmig}, \bibinfo{person}{Stephen
  Bach}, \bibinfo{person}{Matthias Broecheler}, \bibinfo{person}{Bert Huang},
  {and} \bibinfo{person}{Lise Getoor}.} \bibinfo{year}{2012}\natexlab{}.
\newblock \showarticletitle{A short introduction to probabilistic soft logic}.
  In \bibinfo{booktitle}{\emph{Proceedings of the NIPS Workshop on
  Probabilistic Programming: Foundations and Applications}}.
  \bibinfo{pages}{1--4}.
\newblock


\bibitem[\protect\citeauthoryear{Kincaid, Breck, Boroujeni, and Reps}{Kincaid
  et~al\mbox{.}}{2017a}]%
        {kincaid2017compositional}
\bibfield{author}{\bibinfo{person}{Zachary Kincaid}, \bibinfo{person}{Jason
  Breck}, \bibinfo{person}{Ashkan~Forouhi Boroujeni}, {and}
  \bibinfo{person}{Thomas Reps}.} \bibinfo{year}{2017}\natexlab{a}.
\newblock \showarticletitle{Compositional recurrence analysis revisited}.
\newblock \bibinfo{journal}{\emph{ACM SIGPLAN Notices}} \bibinfo{volume}{52},
  \bibinfo{number}{6} (\bibinfo{year}{2017}), \bibinfo{pages}{248--262}.
\newblock


\bibitem[\protect\citeauthoryear{Kincaid, Cyphert, Breck, and Reps}{Kincaid
  et~al\mbox{.}}{2017b}]%
        {kincaid2017non}
\bibfield{author}{\bibinfo{person}{Zachary Kincaid}, \bibinfo{person}{John
  Cyphert}, \bibinfo{person}{Jason Breck}, {and} \bibinfo{person}{Thomas
  Reps}.} \bibinfo{year}{2017}\natexlab{b}.
\newblock \showarticletitle{Non-linear reasoning for invariant synthesis}.
\newblock \bibinfo{journal}{\emph{Proceedings of the ACM on Programming
  Languages}} \bibinfo{volume}{2}, \bibinfo{number}{POPL}
  (\bibinfo{year}{2017}), \bibinfo{pages}{1--33}.
\newblock


\bibitem[\protect\citeauthoryear{Lin, Antsaklis, et~al\mbox{.}}{Lin
  et~al\mbox{.}}{2014}]%
        {lin2014hybrid}
\bibfield{author}{\bibinfo{person}{Hai Lin}, \bibinfo{person}{Panos~J
  Antsaklis}, {et~al\mbox{.}}} \bibinfo{year}{2014}\natexlab{}.
\newblock \showarticletitle{Hybrid dynamical systems: An introduction to
  control and verification}.
\newblock \bibinfo{journal}{\emph{Foundations and Trends{\textregistered} in
  Systems and Control}} \bibinfo{volume}{1}, \bibinfo{number}{1}
  (\bibinfo{year}{2014}), \bibinfo{pages}{1--172}.
\newblock


\bibitem[\protect\citeauthoryear{Nguyen, Antonopoulos, Ruef, and Hicks}{Nguyen
  et~al\mbox{.}}{2017}]%
        {nguyen2017numinv}
\bibfield{author}{\bibinfo{person}{ThanhVu Nguyen}, \bibinfo{person}{Timos
  Antonopoulos}, \bibinfo{person}{Andrew Ruef}, {and} \bibinfo{person}{Michael
  Hicks}.} \bibinfo{year}{2017}\natexlab{}.
\newblock \showarticletitle{Counterexample-guided approach to finding numerical
  invariants}. In \bibinfo{booktitle}{\emph{Proceedings of the 2017 11th Joint
  Meeting on Foundations of Software Engineering}}. ACM,
  \bibinfo{pages}{605--615}.
\newblock


\bibitem[\protect\citeauthoryear{Nguyen, Kapur, Weimer, and Forrest}{Nguyen
  et~al\mbox{.}}{2012a}]%
        {nguyen2012NLA}
\bibfield{author}{\bibinfo{person}{ThanhVu Nguyen}, \bibinfo{person}{Deepak
  Kapur}, \bibinfo{person}{Westley Weimer}, {and} \bibinfo{person}{Stephanie
  Forrest}.} \bibinfo{year}{2012}\natexlab{a}.
\newblock \showarticletitle{Using dynamic analysis to discover polynomial and
  array invariants}. In \bibinfo{booktitle}{\emph{Proceedings of the 34th
  International Conference on Software Engineering}}. IEEE Press,
  \bibinfo{pages}{683--693}.
\newblock


\bibitem[\protect\citeauthoryear{Nguyen, Kapur, Weimer, and Forrest}{Nguyen
  et~al\mbox{.}}{2012b}]%
        {nguyen2012using}
\bibfield{author}{\bibinfo{person}{ThanhVu Nguyen}, \bibinfo{person}{Deepak
  Kapur}, \bibinfo{person}{Westley Weimer}, {and} \bibinfo{person}{Stephanie
  Forrest}.} \bibinfo{year}{2012}\natexlab{b}.
\newblock \showarticletitle{Using dynamic analysis to discover polynomial and
  array invariants}. In \bibinfo{booktitle}{\emph{Proceedings of the 34th
  International Conference on Software Engineering}}. IEEE Press,
  \bibinfo{pages}{683--693}.
\newblock


\bibitem[\protect\citeauthoryear{Nguyen, Kapur, Weimer, and Forrest}{Nguyen
  et~al\mbox{.}}{2014}]%
        {nguyen2014dig}
\bibfield{author}{\bibinfo{person}{Thanhvu Nguyen}, \bibinfo{person}{Deepak
  Kapur}, \bibinfo{person}{Westley Weimer}, {and} \bibinfo{person}{Stephanie
  Forrest}.} \bibinfo{year}{2014}\natexlab{}.
\newblock \showarticletitle{DIG: a dynamic invariant generator for polynomial
  and array invariants}.
\newblock \bibinfo{journal}{\emph{ACM Transactions on Software Engineering and
  Methodology (TOSEM)}} \bibinfo{volume}{23}, \bibinfo{number}{4}
  (\bibinfo{year}{2014}), \bibinfo{pages}{30}.
\newblock


\bibitem[\protect\citeauthoryear{Padhi and Millstein}{Padhi and
  Millstein}{2017}]%
        {loopinvgen2017}
\bibfield{author}{\bibinfo{person}{Saswat Padhi} {and} \bibinfo{person}{Todd~D.
  Millstein}.} \bibinfo{year}{2017}\natexlab{}.
\newblock \showarticletitle{Data-Driven Loop Invariant Inference with Automatic
  Feature Synthesis}.
\newblock \bibinfo{journal}{\emph{CoRR}}  \bibinfo{volume}{abs/1707.02029}
  (\bibinfo{year}{2017}).
\newblock
\showeprint[arxiv]{1707.02029}
\urldef\tempurl%
\url{http://arxiv.org/abs/1707.02029}
\showURL{%
\tempurl}


\bibitem[\protect\citeauthoryear{Padhi, Sharma, and Millstein}{Padhi
  et~al\mbox{.}}{2016}]%
        {padhi2016data}
\bibfield{author}{\bibinfo{person}{Saswat Padhi}, \bibinfo{person}{Rahul
  Sharma}, {and} \bibinfo{person}{Todd Millstein}.}
  \bibinfo{year}{2016}\natexlab{}.
\newblock \showarticletitle{Data-driven precondition inference with learned
  features}.
\newblock \bibinfo{journal}{\emph{ACM SIGPLAN Notices}} \bibinfo{volume}{51},
  \bibinfo{number}{6} (\bibinfo{year}{2016}), \bibinfo{pages}{42--56}.
\newblock


\bibitem[\protect\citeauthoryear{Payani and Fekri}{Payani and Fekri}{2019}]%
        {payani2019inductive}
\bibfield{author}{\bibinfo{person}{Ali Payani} {and} \bibinfo{person}{Faramarz
  Fekri}.} \bibinfo{year}{2019}\natexlab{}.
\newblock \showarticletitle{Inductive Logic Programming via Differentiable Deep
  Neural Logic Networks}.
\newblock \bibinfo{journal}{\emph{arXiv preprint arXiv:1906.03523}}
  (\bibinfo{year}{2019}).
\newblock


\bibitem[\protect\citeauthoryear{Rodr{\'\i}guez-Carbonell and
  Kapur}{Rodr{\'\i}guez-Carbonell and Kapur}{2004}]%
        {rodriguez2004automatic}
\bibfield{author}{\bibinfo{person}{Enric Rodr{\'\i}guez-Carbonell} {and}
  \bibinfo{person}{Deepak Kapur}.} \bibinfo{year}{2004}\natexlab{}.
\newblock \showarticletitle{Automatic generation of polynomial loop invariants:
  Algebraic foundations}. In \bibinfo{booktitle}{\emph{Proceedings of the 2004
  international symposium on Symbolic and algebraic computation}}. ACM,
  \bibinfo{pages}{266--273}.
\newblock


\bibitem[\protect\citeauthoryear{Rodr{\'\i}guez-Carbonell and
  Kapur}{Rodr{\'\i}guez-Carbonell and Kapur}{2007}]%
        {rodriguez2007generating}
\bibfield{author}{\bibinfo{person}{Enric Rodr{\'\i}guez-Carbonell} {and}
  \bibinfo{person}{Deepak Kapur}.} \bibinfo{year}{2007}\natexlab{}.
\newblock \showarticletitle{Generating all polynomial invariants in simple
  loops}.
\newblock \bibinfo{journal}{\emph{Journal of Symbolic Computation}}
  \bibinfo{volume}{42}, \bibinfo{number}{4} (\bibinfo{year}{2007}),
  \bibinfo{pages}{443--476}.
\newblock


\bibitem[\protect\citeauthoryear{Ryan, Wong, Yao, Gu, and Jana}{Ryan
  et~al\mbox{.}}{2020}]%
        {cln2inv}
\bibfield{author}{\bibinfo{person}{Gabriel Ryan}, \bibinfo{person}{Justin
  Wong}, \bibinfo{person}{Jianan Yao}, \bibinfo{person}{Ronghui Gu}, {and}
  \bibinfo{person}{Suman Jana}.} \bibinfo{year}{2020}\natexlab{}.
\newblock \showarticletitle{CLN2INV: Learning Loop Invariants with Continuous
  Logic Networks}. In \bibinfo{booktitle}{\emph{International Conference on
  Learning Representations}}.
\newblock
\urldef\tempurl%
\url{https://openreview.net/forum?id=HJlfuTEtvB}
\showURL{%
\tempurl}


\bibitem[\protect\citeauthoryear{Selsam and Bj{\o}rner}{Selsam and
  Bj{\o}rner}{2019}]%
        {selsam2019guiding}
\bibfield{author}{\bibinfo{person}{Daniel Selsam} {and}
  \bibinfo{person}{Nikolaj Bj{\o}rner}.} \bibinfo{year}{2019}\natexlab{}.
\newblock \showarticletitle{Guiding high-performance SAT solvers with
  unsat-core predictions}. In \bibinfo{booktitle}{\emph{International
  Conference on Theory and Applications of Satisfiability Testing}}. Springer,
  \bibinfo{pages}{336--353}.
\newblock


\bibitem[\protect\citeauthoryear{Selsam, Lamm, B\"{u}nz, Liang, de~Moura, and
  Dill}{Selsam et~al\mbox{.}}{2019}]%
        {selsam2018learning}
\bibfield{author}{\bibinfo{person}{Daniel Selsam}, \bibinfo{person}{Matthew
  Lamm}, \bibinfo{person}{Benedikt B\"{u}nz}, \bibinfo{person}{Percy Liang},
  \bibinfo{person}{Leonardo de Moura}, {and} \bibinfo{person}{David~L. Dill}.}
  \bibinfo{year}{2019}\natexlab{}.
\newblock \showarticletitle{Learning a {SAT} Solver from Single-Bit
  Supervision}. In \bibinfo{booktitle}{\emph{International Conference on
  Learning Representations}}.
\newblock
\urldef\tempurl%
\url{https://openreview.net/forum?id=HJMC_iA5tm}
\showURL{%
\tempurl}


\bibitem[\protect\citeauthoryear{Sharma, Gupta, Hariharan, Aiken, Liang, and
  Nori}{Sharma et~al\mbox{.}}{2013b}]%
        {sharma2013data}
\bibfield{author}{\bibinfo{person}{Rahul Sharma}, \bibinfo{person}{Saurabh
  Gupta}, \bibinfo{person}{Bharath Hariharan}, \bibinfo{person}{Alex Aiken},
  \bibinfo{person}{Percy Liang}, {and} \bibinfo{person}{Aditya~V Nori}.}
  \bibinfo{year}{2013}\natexlab{b}.
\newblock \showarticletitle{A data driven approach for algebraic loop
  invariants}. In \bibinfo{booktitle}{\emph{European Symposium on
  Programming}}. Springer, \bibinfo{pages}{574--592}.
\newblock


\bibitem[\protect\citeauthoryear{Sharma, Gupta, Hariharan, Aiken, and
  Nori}{Sharma et~al\mbox{.}}{2013a}]%
        {sharma2013verification}
\bibfield{author}{\bibinfo{person}{Rahul Sharma}, \bibinfo{person}{Saurabh
  Gupta}, \bibinfo{person}{Bharath Hariharan}, \bibinfo{person}{Alex Aiken},
  {and} \bibinfo{person}{Aditya~V Nori}.} \bibinfo{year}{2013}\natexlab{a}.
\newblock \showarticletitle{Verification as learning geometric concepts}. In
  \bibinfo{booktitle}{\emph{International Static Analysis Symposium}}.
  Springer, \bibinfo{pages}{388--411}.
\newblock


\bibitem[\protect\citeauthoryear{Si, Dai, Raghothaman, Naik, and Song}{Si
  et~al\mbox{.}}{2018}]%
        {si2018learning}
\bibfield{author}{\bibinfo{person}{Xujie Si}, \bibinfo{person}{Hanjun Dai},
  \bibinfo{person}{Mukund Raghothaman}, \bibinfo{person}{Mayur Naik}, {and}
  \bibinfo{person}{Le Song}.} \bibinfo{year}{2018}\natexlab{}.
\newblock \showarticletitle{Learning loop invariants for program verification}.
  In \bibinfo{booktitle}{\emph{Advances in Neural Information Processing
  Systems}}. \bibinfo{pages}{7751--7762}.
\newblock


\bibitem[\protect\citeauthoryear{Si, Raghothaman, Heo, and Naik}{Si
  et~al\mbox{.}}{2019}]%
        {si2019synthesizing}
\bibfield{author}{\bibinfo{person}{Xujie Si}, \bibinfo{person}{Mukund
  Raghothaman}, \bibinfo{person}{Kihong Heo}, {and} \bibinfo{person}{Mayur
  Naik}.} \bibinfo{year}{2019}\natexlab{}.
\newblock \showarticletitle{Synthesizing datalog programs using numerical
  relaxation}. In \bibinfo{booktitle}{\emph{Proceedings of the 28th
  International Joint Conference on Artificial Intelligence}}. AAAI Press,
  \bibinfo{pages}{6117--6124}.
\newblock


\bibitem[\protect\citeauthoryear{Srivastava, Hinton, Krizhevsky, Sutskever, and
  Salakhutdinov}{Srivastava et~al\mbox{.}}{2014}]%
        {srivastava2014dropout}
\bibfield{author}{\bibinfo{person}{Nitish Srivastava},
  \bibinfo{person}{Geoffrey Hinton}, \bibinfo{person}{Alex Krizhevsky},
  \bibinfo{person}{Ilya Sutskever}, {and} \bibinfo{person}{Ruslan
  Salakhutdinov}.} \bibinfo{year}{2014}\natexlab{}.
\newblock \showarticletitle{Dropout: a simple way to prevent neural networks
  from overfitting}.
\newblock \bibinfo{journal}{\emph{The journal of machine learning research}}
  \bibinfo{volume}{15}, \bibinfo{number}{1} (\bibinfo{year}{2014}),
  \bibinfo{pages}{1929--1958}.
\newblock


\bibitem[\protect\citeauthoryear{Yang, Yang, and Cohen}{Yang
  et~al\mbox{.}}{2017}]%
        {yang2017differentiable}
\bibfield{author}{\bibinfo{person}{Fan Yang}, \bibinfo{person}{Zhilin Yang},
  {and} \bibinfo{person}{William~W Cohen}.} \bibinfo{year}{2017}\natexlab{}.
\newblock \showarticletitle{Differentiable learning of logical rules for
  knowledge base reasoning}. In \bibinfo{booktitle}{\emph{Advances in Neural
  Information Processing Systems}}. \bibinfo{pages}{2319--2328}.
\newblock


\bibitem[\protect\citeauthoryear{Zhu, Magill, and Jagannathan}{Zhu
  et~al\mbox{.}}{2018}]%
        {zhu2018chc}
\bibfield{author}{\bibinfo{person}{He Zhu}, \bibinfo{person}{Stephen Magill},
  {and} \bibinfo{person}{Suresh Jagannathan}.} \bibinfo{year}{2018}\natexlab{}.
\newblock \showarticletitle{A data-driven CHC solver}. In
  \bibinfo{booktitle}{\emph{ACM SIGPLAN Notices}}, Vol.~\bibinfo{volume}{53}.
  ACM, \bibinfo{pages}{707--721}.
\newblock


\end{thebibliography}

\appendix
\section{Proof for Theorem 3.1}

\parheader{Property 1.} $\forall t\ u, (t>0) \land (u>0) \implies (t \otimes u > 0)$. 

\parheader{Theorem 3.1. } For a gated CLN model $\mathcal{M}$ with input nodes $\{x_1,x_2,...,x_n\}$ and output node $p$, if all gating parameters $\{g_i\}$ are either 0 or 1, then using the formula extraction algorithm, the recovered SMT formula $F$ is equivalent to the gated CLN model $\mathcal{M}$. That is, $\forall x \in R^n$, 
\begin{align}
    F(x) = True \iff \lim_{\substack{c_1 \rightarrow 0, c_2 \rightarrow \infty \\ \sigma \rightarrow 0, \epsilon \rightarrow 0}} \mathcal{M}(x;c_1,c_2,\sigma,\epsilon) = 1 \label{eq:appsoundness1} \\
    F(x) = False \iff \lim_{\substack{c_1 \rightarrow 0, c_2 \rightarrow \infty \\ \sigma \rightarrow 0, \epsilon \rightarrow 0}} \mathcal{M}(x;c_1,c_2,\sigma,\epsilon) = 0 \label{eq:appsoundness2}
\end{align}
as long as the t-norm in $\mathcal{M}$ satisfies Property 1.

\begin{proof}

We prove this by induction.

\parheader{T-norm Case.} If $p = T_G(\mathcal{M}_1,...,\mathcal{M}_n;g_1,...,g_n)$
which means the final operation in $\mathcal{M}$ is a gated t-norm, we know that for each submodel $\mathcal{M}_1,...,\mathcal{M}_n$ the gating parameters are all either 0 or 1. According to the induction hypothesis, for each $\mathcal{M}_i$, using Algorithm 1, we can extract an equivalent SMT formula $F_i$ satisfying Eq. (\ref{eq:appsoundness1})(\ref{eq:appsoundness2}). Now we prove the full model $\mathcal{M}$ and the extracted formula $F$ satisfy Eq. (\ref{eq:appsoundness1}). The proof for Eq. (\ref{eq:appsoundness2}) is similar and omitted for brevity. For simplicity we use $\mathcal{M}(x)$ to denote $\lim_{\substack{\sigma \rightarrow 0, \epsilon \rightarrow 0 \\ c_1 \rightarrow 0, c_2 \rightarrow \infty}} \mathcal{M}(x;c_1,c_2,\sigma,\epsilon)$. 

Now the proof goal becomes $F(x)=True \iff \mathcal{M}(x)=1$, and the induction hypothesis is
\begin{equation}
\label{eq:soundnessIH}
\forall i \in \{1,...,n\},\ F_i(x)=True \iff \mathcal{M}_i(x)=1
\end{equation}

From line 2-5 in Algorithm 1, we know that the extracted formula $F$ is the conjunction of a subset of formulas $F_i$ if the $F_i$ is activated ($g_i > 0.5$). Because in our setting all gating parameters are either 0 or 1, then a logical equivalent form of $F$ can be derived.
\begin{equation}
    \label{eq:soundnessFAnd}
    F(x) = \land_{i=1}^n ((g_i = 0) \lor F_i(x))
\end{equation}

Recall we are considering the t-norm case where
\begin{align*}
 & \mathcal{M}(x) = T_G(\mathcal{M}_1(x),...,\mathcal{M}_n(x);g_1,...,g_n) = \\
 & (1 + g_1(\mathcal{M}_1(x)-1)) \otimes ... \otimes (1 + g_n(\mathcal{M}_n(x)-1))
\end{align*}
Using the properties of a t-norm in \S{2.2} we can prove that 
\begin{align*}
 \mathcal{M}(x) = 1 \iff \forall i \in \{1,...,n\},\ (1 + g_i(\mathcal{M}_i(x)-1)) = 1
\end{align*}
Because the gating parameter $g_i$ is either 0 or 1, we further have
\begin{equation}
\label{eq:soundnessMxMix}
 \mathcal{M}(x) = 1 \iff \forall i \in \{1,...,n\},\ g_i = 0 \lor \mathcal{M}_i = 1
\end{equation}

Combining Eq. (\ref{eq:soundnessIH})(\ref{eq:soundnessFAnd})(\ref{eq:soundnessMxMix}), we will finally have
\begin{align*}
     & F(x)=True \iff \forall i \in \{1,...,n\},\ (g_i = 0) \lor F_i(x) = True \\
     & \iff \forall i \in \{1,...,n\},\ (g_i = 0) \lor \mathcal{M}_i(x)=1 \\
     & \iff \mathcal{M}(x) = 1
\end{align*}

\parheader{T-conorm Case.} If $p = T_G'(\mathcal{M}_1,...,\mathcal{M}_n;g_1,...,g_n)$ which means the final operation in $\mathcal{M}$ is a gated t-conorm, similar to the t-norm case, for each submodel $\mathcal{M}_i$ we can extract its equivalent SMT formula $F_i$ using Algorithm 1 according to the induction hypothesis. Again we just prove the full model $\mathcal{M}$ and the extracted formula $F$ satisfy Eq. (\ref{eq:appsoundness1}).

From line 7-10 in Algorithm 1, we know that the extracted formula $F$ is the disjunction of a subset of formulas $F_i$ if the $F_i$ is activated ($g_i > 0.5$). Because in our setting all gating parameters are either 0 or 1, then a logical equivalent form of $F$ can be derived.
\begin{equation}
    \label{eq:soundnessFOr}
    F(x) = \lor_{i=1}^n ((g_i = 1) \land F_i(x))
\end{equation}

Under the t-conorm case, we have
\begin{align*}
 & \mathcal{M}(x) = T_G'(\mathcal{M}_1(x),...,\mathcal{M}_n(x);g_1,...,g_n) = \\
 & 1 - (1 - g_1\mathcal{M}_1(x)) \otimes ... \otimes (1 - g_n\mathcal{M}_n(x))
\end{align*}
Using Property 1 we can prove that 
\begin{align*}
 \mathcal{M}(x) = 1 \iff \exists i \in \{1,...,n\},\ 1 - g_i\mathcal{M}_i(x) = 0
\end{align*}
Because the gating parameter $g_i$ is either 0 or 1, we further have
\begin{equation}
\label{eq:soundnessMxMix2}
 \mathcal{M}(x) = 1 \iff \exists i \in \{1,...,n\}, g_i = 1 \land \mathcal{M}_i = 1
\end{equation}

Combining Eq. (\ref{eq:soundnessIH})(\ref{eq:soundnessFOr})(\ref{eq:soundnessMxMix2}), we will finally have
\begin{align*}
     & F(x)=True \iff \exists i \in \{1,...,n\},\ (g_i = 1) \land F_i(x) = True \\
     & \iff \exists i \in \{1,...,n\},\ (g_i = 1) \land \mathcal{M}_i(x)=1 \\
     & \iff \mathcal{M}(x) = 1
\end{align*}

\parheader{Negation Case.} If $p = 1 - \mathcal{M}_1$ which means the final operation is a negation, from the induction hypothesis we know that using Algorithm 1 we can extract an SMT formula $F_1$ from submodel $\mathcal{M}_1$ satisfying Eq. (\ref{eq:appsoundness1})(\ref{eq:appsoundness2}). From line 11-12 in Algorithm 1 we know that the extracted formula for $\mathcal{M}$ is $F=\neg F_1$. Now we show such an $F$ satisfy Eq. (\ref{eq:appsoundness1})(\ref{eq:appsoundness2}).
\begin{align*}
     & F(x) = True \iff F_1(x) = False \iff M_1(x) = 0 \\ 
     & \iff M(x) = 1 \\
     & F(x) = False \iff F_1(x) = True \iff M_1(x) = 1 \\ 
     & \iff M(x) = 0
\end{align*}

\parheader{Atomic Case.} In this case, the model $\mathcal{M}$ consists of only a linear layer and an activation function, with no logical connectiveness. This degenerates to the atomic case for the ungated CLN in \cite{cln2inv} where the proof can simply be reused.
\end{proof}

\section{Proof for Theorem 3.2}

Recall our continuous mapping for inequalities.
\begin{align}
\label{eq:appinequalityMapping}
\C(t \geq u) \triangleq \left\{\begin{array}{lr}
    \frac{c_1^2}{(t-u)^2+c_1^2} & t < u \\
    \frac{c_2^2}{(t-u)^2+c_2^2} & t \geq u 
\end{array}\right.
\end{align}

\parheader{Theorem 3.2. } Given a set of $n$ $k$-dimensional samples with the maximum L2-norm $l$, if $c_1 \leq 2l$ and $c_1 \cdot c_2 \geq 8 \sqrt{n} l^2$, and the weights are constrained as $w_1^2 + ... + w_k^2 = 1$, then when the model converges, the learned inequality has distance at most $c_1/\sqrt{3}$ from a 'desired' inequality.

\begin{proof}
The model maximize the overall continuous truth value, so the reward function is 
\begin{align*}
    R(w_1,...,w_k,b) = \sum_{i=1}^n f(w_1x_{i1} + ... + w_k x_{ik} + b)
\end{align*}
where $f(x)=\C(x \geq 0)$. We first want to prove that, if there exists a point that breaks the inequality with distance more than $c_1 / \sqrt{3}$,
then $\frac{\partial R}{\partial b} > 0$, indicating the model will not converge here. Without loss of generality, we assume the first point $\{x_{11},...x_{1k}\}$ breaks the inequality, i.e.,
\begin{align}
\label{eq:ineqOnePointBreak}
    w_1 x_{i1} + ... + w_k x_{ik} + b + C1/\sqrt{3} < 0
\end{align}
We will consider the following two cases.

(${\romannumeral 1}$) All the points breaks the inequality, i.e.
\begin{align*}
    \forall i \in \{1,...,n\}, w_1 x_{i1} + ... + w_k x_{ik} + b < 0
\end{align*}
From Eq. (\ref{eq:appinequalityMapping}), it is easy to see that
\begin{align}
\label{eq:ineqDerivativeSign}
    \left\{\begin{array}{cc}
        f'(x) > 0 & x < 0 \\
        f'(x) < 0 & x > 0
    \end{array} \right.
\end{align}
Then we have
\begin{align*}
    \frac{\partial R}{\partial b} = \sum_{i=1}^n f'(w_1x_{i1} + ... + w_k x_{ik} + b) > \sum_{i=1}^n 0 = 0
\end{align*}

($\romannumeral 2$) At least one point, say, $\{x_{j1}, ..., x_{jk}\}$, satisfies the inequality.
\begin{align}
\label{eq:ineqOnePointSatisfy}
    w_1 x_{j1} + ... + w_k x_{jk} + b \geq 0
\end{align}
From Cauchy–Schwarz inequality, $\forall i \in \{1,...,n\}$ we have
\begin{align*}
    & (w_1 x_{i1} + ... + w_k x_{ik})^2 \leq (w_1^2 + ... + w_k^2)(x_{i1}^2 + ... + x_{ik}^2) \\
    & = 1 \cdot (x_{i1}^2 + ... + x_{ik}^2) \leq l^2
\end{align*}
So
\begin{align}
\label{eq:ineqWxbound}
    \forall i \in \{1,...,n\},\ -l \leq w_1 x_{i1} + ... + w_k x_{ik} \leq l
\end{align}
Combining Eq. (\ref{eq:ineqOnePointBreak})(\ref{eq:ineqOnePointSatisfy})(\ref{eq:ineqWxbound}), a lower bound and an upper bound of $b$ can be obtained
\begin{align}
\label{eq:ineqBBound}
    -l \leq b < l - c_1/\sqrt{3}
\end{align}
Using basic calculus we can show that $f'(x)$ is strictly increasing in $(-\infty, -c_1/\sqrt{3})$ and $(c_2/\sqrt{3}, +\infty)$, and strictly decreasing in $(-c_1/\sqrt{3}, c_2/\sqrt{3})$. Combining Eq. (\ref{eq:ineqDerivativeSign})(\ref{eq:ineqWxbound})(\ref{eq:ineqBBound}), we have
\begin{align}
\label{eq:ineqGeneralLowerBound}
    \begin{split}
    & f'(w_1 x_{i1} + ... + w_k x_{ik} + b) \geq f'(|w_1 x_{i1} + ... + w_k x_{ik} + b|) \\
    & \geq f'(|w_1 x_{i1} + ... + w_k x_{ik}| + |b|) \geq f'(l+l) = f'(2l)
    \end{split}
\end{align}
The deduction of Eq. (\ref{eq:ineqGeneralLowerBound}) requires $2l \leq C2/\sqrt{3}$, which can be obtained from the two known conditions $c_1 \leq 2l$ and $c_1c_2 \geq 8 \sqrt{n} l^2$. 

Eq. (\ref{eq:ineqGeneralLowerBound}) provides a lower bound of the derivative for any point. For the point that breaks the inequality in Eq. (\ref{eq:ineqOnePointBreak}), since $f'(x)$ is strictly increasing in $(\infty, -c_1/\sqrt{3})$, we can obtain a stronger lower bound
\begin{align}
    f'(w_1x_{11} + ... + w_k x_{1k} + b) \geq f'(-l-l) = f'(-2l)
\end{align}
Put it altogether, we have
\begin{align*}
    & \frac{\partial R}{\partial b} =  f'(w_1x_{11} + ... + w_k x_{1k} + b) + \\
    & \sum_{i=2}^n f'(w_1x_{i1} + ... + w_k x_{ik} + b) \\
    & \geq f'(-2l) + (n-1) f'(2l) > f'(-2l) + n f'(2l) \\
    & > \frac{4l}{c_1^2 (1 + \frac{4l^2}{c_1^2})^2 (1 + \frac{4l^2}{c_2^2})^2} \cdot (1 - \frac{64nl^4}{c_1^2 c_2^2})
\end{align*}
Some intermediate steps are omitted. Because we know $c_1 c_2 \geq 8 \sqrt{n} l^2$, finally we have $\frac{\partial R}{\partial b} > 0$.

Now we have proved that no point can break the learned inequality more than $c_1/\sqrt{3}$. We need to prove at least one point is on or beyond the boundary. We prove this by contradiction. Suppose all points satisfy
\[w_1 x_{i1} + ... + w_k x_{ik} + b > 0\]
Then we have 
\[\frac{\partial R}{\partial b} = \sum_{i=1}^n f'(w_1x_{i1} + ... + w_k x_{ik} + b) < \sum_{i=1}^n 0 = 0\]
So the model does not converge, which concludes the proof.
\end{proof}

\end{document}